\newcommand{\D}{\rm D}
\newcommand{\Q}{\mathcal{Q}}
\newcommand{\eps}{\varepsilon}
\newcommand{\F}{\mathcal{F}}
\newcommand{\N}{\mathbb{N}}
\newcommand{\R}{\mathbb{R}}
\newcommand{\id}{\mathds{1}}
\newtheorem{theorem}{Theorem}[section]
\newtheorem{corollary}[theorem]{Corollary}
\newtheorem{lemma}[theorem]{Lemma}
\newtheorem{proposition}[theorem]{Proposition}
\theoremstyle{definition}
\newtheorem{remark}[theorem]{Remark}
\newtheorem{assumption}[theorem]{Assumption}
\numberwithin{equation}{section}
\begin{document}

\title[Laplace operator with mixed boundary conditions]{On the lowest eigenvalue of Laplace operators with mixed boundary conditions}

\author {Hynek Kova\v{r}\'{\i}k}

\address {Hynek Kova\v{r}\'{\i}k, Dipartimento di Matematica \\ Universit\`a degli Studi di Brescia \\ Via Branze, \\38 - 25123, Bresica Italy}

\email {hynek.kovarik@ing.unibs.it}


\begin{abstract}
In this paper we consider a Robin-type Laplace operator on bounded domains. We study the dependence of its lowest eigenvalue on the boundary conditions and its asymptotic behaviour in shrinking and expanding domains. For convex domains we establish two-sided estimates on the lowest eigenvalues in terms of the inradius  and of the boundary conditions. 
\end{abstract}

\maketitle

 {\bf AMS Mathematics Subject Classification:}  47F05, 49R05 \\

{\bf  Keywords:} Robin Laplacian, lowest eigenvalue, convex domains \\

\section{\bf Introduction}

Let $\Omega\subset\R^N$ be a bounded domain. Given a measurable function $\sigma :\partial\Omega\to \R$,
we consider the quadratic form
\begin{equation} \label{quad-f}
\int_\Omega |\nabla u|^2\, dx  + \int_{\partial\Omega} \sigma\, |u|^2\, d\nu,
\end{equation}
where $d\nu$ denotes the $(N-1
)$ dimensional surface measure on $\partial\Omega$. If $\Omega$ is regular enough and if $\sigma \in L^{N-1}(\partial\Omega)$, then by the boundary trace imbedding theorems,  equation \eqref{trace-imbed} below, it follows that the quadratic form \eqref{quad-f} is closed on $H^1(\Omega)$ and generates in $L^2(\Omega)$ a unique self-adjoint operator, the so-called Robin-Laplacian. The case $\sigma= +\infty $ then corresponds to the Dirichlet boundary conditions  while by choosing $\sigma=0$ we get the Neumann boundary conditions.

The lowest eigenvalue of the Robin-Laplacian, which we denote by $\lambda_1(\sigma, \Omega)$, is the main object of our interest. Problems related to the Robin-Laplacian have been intensively studied in the literature. Among other questions, various problems such as Faber-Krahn inequalities, Hardy inequalities, monotonicity properties of the lowest eigenvalue, and comparison between Robin and Dirichlet or Neumann eigenvalues were considered in the literature, see \cite{boss1, boss2, bg, cu, da, da2, gs1, kl, lp, pw, ph, sp, sp2}.

\smallskip

The purpose of the present paper is twofold. First we will study the dependence of the lowest eigenvalue of the Robin-Laplacian on the function $\sigma$. We start by addressing the following question: which functions $\sigma$ maximise (or minimise) $\lambda_1(\sigma, \Omega)$ among all positive functions from $L^1(\partial\Omega)$ with a fixed integral mean and with a support contained in a prescribed subset of the boundary? It turns out that while the minimising $\sigma$ generically does not exist, the maximising function exists and is unique. An explicit description is given in Theorem \ref{thm-sup}. Next one would like to know how big the resulting maximum is. Sharp two-sided estimates on the corresponding maximal eigenvalue are given in Propositions \ref{prop-ub}, \ref{prop-lb} and Corollary \ref{cor-lambdam}. 

In the second part of the paper we will study the properties of $\lambda_1(\sigma, \Omega)$ for a fixed $\sigma$. In Theorem \ref{prop:scaling} it will be shown that, contrary to the lowest eigenvalue of the Dirichlet-Laplacian,  $\lambda_1(\sigma, \Omega)$ scales in a different way when the domain $\Omega$ shrinks to zero respectively when $\Omega$ blows up to infinity. We then prove a two-sided bound for $\lambda_1(\sigma, \Omega)$ on convex domains with constant $\sigma$. In particular,  we will show that 
$$
\lambda_1(\sigma, \Omega) \, \asymp \, \frac{\sigma}{ R_\Omega(1+\sigma\, R_\Omega)}  \qquad \quad [\, \Omega\ \ \text{convex}, \ \sigma\  \ \text{constant} ],
$$
where $R_\Omega$ is the inradius of $\Omega$, see Theorem \ref{thm-convex}.

\section{\bf Preliminaries} \label{s-prelim}

\noindent 
Throughout the paper will always assume that the following condition is satisfied:

\begin{assumption} \label{ass-basic}
$\Omega$ is an open bounded and connected set with a boundary which satisfies the strong local Lipschitz condition, see e.g. \cite[Chap.4]{ad}.   
\end{assumption}

\noindent Under the above assumption a trace operator is well defined on $H^1(\Omega)$. More precisely, we have
\begin{equation} \label{trace-imbed}
\| u\|_{L^q(\partial\Omega)} \, \leq\, C \, \| u\|_{H^1(\Omega)}, \qquad \forall\ q \, \left\{
\begin{array}{l@{\qquad}l}
\leq \, 2(N-1)/(N-2)    &       {\rm if \ } \   N>2,  \\
< +\infty    &    {\rm if \ } \  N=2
\end{array}
\right. 
\end{equation}
with a compact imbedding. This follows from standard Sobolev imbedding theorems and trace inequalities, see e.g. \cite[Thm.5.22]{ad}. For a 
given $\sigma \in L^1(\partial\Omega)$ we then consider the functional $Q[\sigma, \cdot\, ]$ on $H^1(\Omega)$ defined by
\begin{equation} \label{form}
Q[\sigma,u ]= \frac{\int_\Omega |\nabla u(x)|^2 \,dx + \int_{\partial\Omega} \sigma(s)\, |u(s)|^2\, d\nu(s)}{\| u\|^2_{L^2(\Omega)} } 
\end{equation}
if the right hand side is finite and by $Q[\sigma, u] = + \infty$ otherwise.  Let 
\begin{equation} \label{lambda}
\lambda_1(\sigma, \Omega) = \inf_{u \in H^1(\Omega)} Q[\sigma, u]. 
\end{equation}

\begin{lemma} \label{lem-minimiser}
Let $\sigma \in  L^1(\partial\Omega)$ and assume that $\sigma\geq 0$. Then the functional $Q[\sigma, \cdot\, ]$ admits a positive minimiser $\psi\in H^1(\Omega)$ which satisfies
\begin{equation} \label{euler-lagr}
-\Delta  \psi = \lambda_1(\sigma, \Omega)\, \psi \quad \text{in \ } \Omega, \qquad \partial_n \psi + \sigma\, \psi =0 \quad \text{a.e. \ \ on \ } \partial\Omega,
\end{equation}
where $\partial_n$ denotes the outer normal derivative.
\end{lemma}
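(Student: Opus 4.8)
The plan is to run the direct method of the calculus of variations on the Rayleigh quotient \eqref{form}; the hypothesis $\sigma\ge 0$ will provide coercivity for free and, through Fatou's lemma, the lower semicontinuity that compensates for the weak assumption $\sigma\in L^1(\partial\Omega)$. First observe that $\lambda_1(\sigma,\Omega)\in[0,\infty)$: the numerator of $Q[\sigma,u]$ is non-negative, and testing with the constant function gives $\lambda_1(\sigma,\Omega)\le|\Omega|^{-1}\|\sigma\|_{L^1(\partial\Omega)}$. Take a minimising sequence $(u_n)\subset H^1(\Omega)$ with $\|u_n\|_{L^2(\Omega)}=1$ and $\int_\Omega|\nabla u_n|^2+\int_{\partial\Omega}\sigma|u_n|^2\le\lambda_1(\sigma,\Omega)+1/n$. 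Since $\sigma\ge 0$ the Dirichlet integrals stay bounded, so $(u_n)$ is bounded in $H^1(\Omega)$; passing to a subsequence, $u_n\rightharpoonup\psi$ in $H^1(\Omega)$. The Rellich theorem gives $u_n\to\psi$ in $L^2(\Omega)$, so $\|\psi\|_{L^2(\Omega)}=1$, and the compact trace imbedding \eqref{trace-imbed} with $q=2$ gives $u_n\to\psi$ in $L^2(\partial\Omega)$ and, along a further subsequence, $\nu$-a.e.\ on $\partial\Omega$. Weak lower semicontinuity of the Dirichlet integral together with Fatou's lemma applied to the non-negative functions $\sigma|u_n|^2$ yields
\[
\int_\Omega|\nabla\psi|^2+\int_{\partial\Omega}\sigma|\psi|^2\ \le\ \liminf_{n\to\infty}\Big(\int_\Omega|\nabla u_n|^2+\int_{\partial\Omega}\sigma|u_n|^2\Big)\ =\ \lambda_1(\sigma,\Omega),
\]
so $Q[\sigma,\psi]=\lambda_1(\sigma,\Omega)$ and $\psi$ is a minimiser. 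Since $\bigl|\nabla|\psi|\bigr|=|\nabla\psi|$ a.e., the function $|\psi|$ has the same $L^2(\Omega)$-norm and Dirichlet integral as $\psi$, and its boundary trace is the absolute value of the trace of $\psi$; hence $|\psi|$ is also a minimiser and we may assume $\psi\ge 0$.

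To obtain \eqref{euler-lagr}, vary $\psi$ inside the form domain $\{v\in H^1(\Omega):\int_{\partial\Omega}\sigma|v|^2<\infty\}$: for such $v$ and small $t$ the function $t\mapsto Q[\sigma,\psi+tv]$ is finite and minimal at $t=0$, and differentiating gives the weak identity
\[
\int_\Omega\nabla\psi\cdot\nabla v+\int_{\partial\Omega}\sigma\,\psi\,v\ =\ \lambda_1(\sigma,\Omega)\int_\Omega\psi\,v .
\]
Testing against $v\in C_c^\infty(\Omega)$ shows $-\Delta\psi=\lambda_1(\sigma,\Omega)\,\psi$ in $\Omega$ in the distributional sense, and interior elliptic regularity bootstraps this to $\psi\in C^\infty(\Omega)$. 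Integrating by parts in the weak identity then isolates the natural boundary condition $\partial_n\psi+\sigma\,\psi=0$ $\nu$-a.e.\ on $\partial\Omega$, the normal derivative being understood in the weak sense available because $\Delta\psi\in L^2_\loc(\Omega)$. Finally, since $\lambda_1(\sigma,\Omega)\ge 0$ and $\psi\ge 0$, the equation gives $\Delta\psi=-\lambda_1(\sigma,\Omega)\,\psi\le 0$, so $\psi$ is superharmonic on the connected open set $\Omega$; by the strong minimum principle it cannot attain its infimum at an interior point unless it is constant, and in either case — using $\psi\not\equiv 0$ — one concludes $\psi>0$ throughout $\Omega$.

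The only genuinely delicate point is the low integrability of $\sigma$. Because $\sigma$ lies merely in $L^1(\partial\Omega)$, the boundary term of $Q$ is not continuous along weakly $H^1$-convergent sequences, and it is exactly the sign condition $\sigma\ge 0$ that restores lower semicontinuity through Fatou's lemma; this is where, and why, the hypothesis $\sigma\ge 0$ enters, and it also indicates why one should not expect the analogous conclusion for sign-changing $\sigma\in L^1(\partial\Omega)$. A secondary technical issue is the precise meaning of the pointwise boundary condition on a merely Lipschitz boundary with $\sigma\in L^1(\partial\Omega)$; this is handled by reading the condition off the weak identity, interpreting $\partial_n\psi$ as the generalised normal trace of the function $\psi\in H^1(\Omega)$ with $L^2_\loc$ Laplacian and using the density of the admissible boundary data.
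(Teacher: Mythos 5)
Your proof is correct and follows essentially the same route as the paper: the direct method with the compact trace imbedding \eqref{trace-imbed}, Fatou's lemma for the $L^1$ boundary term, replacement of $\psi$ by $|\psi|$ to get non-negativity, and the Euler--Lagrange equation for \eqref{euler-lagr}. The only immaterial difference is that you derive strict positivity from the strong minimum principle for superharmonic functions, whereas the paper invokes the Harnack inequality.
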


\begin{proof}
Let $\{u_j\}_{j\in\N}$ be a minimising sequence for $Q[\sigma, \cdot\, ]$. Without loss of generality we assume that $\|u_j\|_{L^2(\Omega)}=1$ for all $j\in\N$. Since $\{u_j\}$ is bounded in $H^1(\Omega)$, there exists a subsequence, which we still denote by $u_j$ and a function $\psi\in H^1(\Omega)$ such that $u_j \to \psi$ weakly in $H^1(\Omega)$. Next, from the compactness of the imbedding  $H^1(\Omega) \hookrightarrow L^2(\partial\Omega)$, see \eqref{trace-imbed}, it follows that the trace of $u_j$
converges strongly in $L^2(\partial\Omega)$ to the trace of $\psi$. Therefore we can find a subsequence $\{v_j\} \subset \{u_j\}$ such that 
$v_j |_{\partial\Omega} \to \psi|_{\partial\Omega}$ almost everywhere on $\partial\Omega$. By the weak lower semicontinuity of $\int_\Omega |\nabla u|^2$ and  the Fatou Lemma we thus obtain 
$$
\liminf_{j\to\infty} Q[\sigma, v_j] \geq Q[\sigma, \psi].
$$
Hence $\psi$ is a minimiser of $Q[\sigma, \cdot\, ]$. From the fact that $Q[\sigma, \psi] \geq Q[\sigma, |\psi| ]$ it follows that $\psi \geq 0$ in $\Omega$. Therefore, by Harnack inequality $\psi>0$ in $\Omega$. The Euler-Lagrange equation for $Q[\sigma, \cdot\ ]$ then gives \eqref{euler-lagr}.
\end{proof}

\begin{remark} The assumption $\sigma\geq 0$ in the above Lemma is necessary. Indeed, if $\sigma\in L^1(\partial\Omega)$ changes sign, then the functional $Q[\sigma, \cdot]$ might not even be bounded from below 
\end{remark}


\section{\bf Optimising problem for $\lambda_1(\sigma, \Omega)$} \label{s-optim}

\noindent 
In this section we will only assume that $0\leq \sigma\in L^1(\partial\Omega)$. Note that although this condition does not guarantee the finiteness of $Q[\sigma, u]$ for all $u\in H^1(\Omega)$, the quantity $\lambda_1(\sigma, \Omega)$ is well defined. 
Let $\Gamma\subseteq \partial\Omega$ be a closed subset of the boundary (which might coincide with $\partial\Omega$). For a given $m>0$ we define
$$
\Sigma_m := \Big\{ \sigma \in L^1(\partial\Omega)\, :\, \sigma \geq 0, \ \ \int_{\partial\Omega} \sigma = m, \ \ {\rm supp\, }\sigma \subseteq\Gamma \Big\}.
$$
Our goal is to study $\lambda_1(\sigma, \Omega)$ as a functional of $\sigma$ on $\Sigma_m$. To start with we show that the functional $\lambda_1(\cdot, \Omega)$ admits no minimum on $\Sigma_m$ when $N\geq 2$.

\subsection{The infimum} 

\begin{proposition} \label{prop-inf1}
Let $m>0$ and suppose that $N\geq 2$. Then $\lambda_1(\cdot, \Omega)$ has no minimiser on $\Sigma_m$ and $\inf_{\sigma\in \Sigma_m} \lambda_1(\sigma, \Omega)= 0$.
\end{proposition}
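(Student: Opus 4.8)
The plan is to show that $\lambda_1(\sigma,\Omega)\le \lambda_1(\sigma_\eps,\Omega)$ can be made arbitrarily small by concentrating the mass $m$ of $\sigma$ on a very small piece of $\Gamma$. Since $\lambda_1(\sigma,\Omega)\ge 0$ automatically (the quadratic form is nonnegative for $\sigma\ge 0$), it suffices to exhibit, for each $\eps>0$, some $\sigma_\eps\in\Sigma_m$ with $\lambda_1(\sigma_\eps,\Omega)<\eps$; the non-existence of a minimiser is then immediate because the infimum $0$ is not attained (if $\lambda_1(\sigma,\Omega)=0$ for some admissible $\sigma$, the minimiser $\psi$ from Lemma~\ref{lem-minimiser} would be a positive harmonic function with $\partial_n\psi=-\sigma\psi\le 0$ on $\partial\Omega$ and $\partial_n\psi\not\equiv 0$, contradicting $\int_{\partial\Omega}\partial_n\psi=0$).

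\textbf{Step 1: choice of $\sigma_\eps$.} Fix a point $x_0$ in the relative interior of $\Gamma$ and let $B_\rho=B_\rho(x_0)$. Since $\Gamma$ has positive surface measure near $x_0$, for small $\rho$ we have $\nu(\Gamma\cap B_\rho)>0$, and we set $\sigma_\rho := \frac{m}{\nu(\Gamma\cap B_\rho)}\,\mathds{1}_{\Gamma\cap B_\rho}$, so that $\sigma_\rho\in\Sigma_m$.

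\textbf{Step 2: choice of test function.} Use a fixed cutoff: let $\varphi\in C^\infty(\R^N)$, $0\le\varphi\le 1$, with $\varphi\equiv 1$ outside $B_{2\rho}(x_0)$ and $\varphi\equiv 0$ on $B_\rho(x_0)$, and $|\nabla\varphi|\le C/\rho$. Take $u=\varphi|_\Omega\in H^1(\Omega)$. Then the boundary term $\int_{\partial\Omega}\sigma_\rho|u|^2\,d\nu=0$ since $\sigma_\rho$ is supported where $u=0$, so
$$
\lambda_1(\sigma_\rho,\Omega)\le Q[\sigma_\rho,u]=\frac{\int_\Omega|\nabla\varphi|^2\,dx}{\int_\Omega\varphi^2\,dx}\le\frac{C^2\rho^{-2}\,|\Omega\cap B_{2\rho}(x_0)|}{\int_\Omega\varphi^2\,dx}.
$$

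\textbf{Step 3: estimate the numerator.} By the Lipschitz assumption on $\partial\Omega$, $|\Omega\cap B_{2\rho}(x_0)|\le C'\rho^{N}$, hence the numerator is $O(\rho^{N-2})$, while the denominator converges to $|\Omega|>0$ as $\rho\to 0$. Therefore $\lambda_1(\sigma_\rho,\Omega)\to 0$ as $\rho\to 0$ when $N\ge 3$. When $N=2$ the numerator is merely $O(1)$ and this crude test function is not good enough; the main obstacle is precisely this two-dimensional (and borderline) case.

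\textbf{Step 4: the case $N=2$.} Here I would replace the cutoff by a logarithmically tuned one: let $u_\rho$ equal $0$ on $B_\rho(x_0)$, equal $1$ outside $B_{\sqrt{\rho}}(x_0)$, and interpolate radially by $u_\rho(x)=\log(|x-x_0|/\rho)/\log(1/\sqrt{\rho})$ in between. Then $\int_\Omega|\nabla u_\rho|^2\,dx\le C/\log(1/\rho)\to 0$, while $\int_\Omega u_\rho^2\,dx\to|\Omega|$, and again $u_\rho=0$ on the support of $\sigma_\rho$. This gives $\lambda_1(\sigma_\rho,\Omega)\le C/\log(1/\rho)\to 0$, completing the argument for $N=2$. (For general $N\ge 2$ one may uniformly use the $N=2$-type construction in dimension $2$ and the power cutoff in dimension $\ge 3$, or simply a single logarithmic cutoff capped appropriately.) I expect Steps 1–3 to be routine; the only real care is needed in verifying the $N=2$ capacity estimate of Step 4 and in checking that $B_\rho(x_0)$ genuinely meets $\Gamma$ in positive measure, which follows from $\Gamma$ being a closed subset of a Lipschitz boundary with nonempty relative interior (or, if $\Gamma$ is a null set, the statement is vacuous as $\Sigma_m=\varnothing$).
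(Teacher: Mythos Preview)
Your proof is correct and follows essentially the same strategy as the paper: concentrate $\sigma$ near a single boundary point of $\Gamma$ and use a capacity-type test function (a Lipschitz/smooth radial cutoff for $N\ge 3$, a logarithmic cutoff for $N=2$) to drive the Rayleigh quotient to zero. The paper's test functions are $u_n(x)=n|x-s_0|$ for $N\ge3$ and $u_n(x)=-\log n/\log|x-s_0|$ for $N=2$ on a shrinking ball, which do not vanish identically on the support of $\sigma_n$ but are small enough there; your choice of a cutoff that is exactly zero on $\mathrm{supp}\,\sigma_\rho$ is a cosmetic variant of the same capacity computation.

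The only substantive difference is the non-attainment step. The paper argues directly via a Poincar\'e-type inequality: any $\sigma\in\Sigma_m$ with $m>0$ satisfies $\sigma\ge\eps$ on some $\gamma_\eps\subset\Gamma$ of positive surface measure, and then
\[
\int_\Omega|\nabla u|^2\,dx+\eps\int_{\gamma_\eps}|u|^2\,d\nu\ \ge\ c_\eps\,\|u\|_{L^2(\Omega)}^2
\]
forces $\lambda_1(\sigma,\Omega)\ge c_\eps>0$. Your route through the Euler--Lagrange equation and the divergence theorem also works (and can be shortened: if $\lambda_1(\sigma,\Omega)=0$ then the minimiser $\psi$ has $Q[\sigma,\psi]=0$, so $\nabla\psi\equiv 0$ and $\int_{\partial\Omega}\sigma|\psi|^2=0$, hence $\psi$ is a nonzero constant with $m|\psi|^2=0$, a contradiction). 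One small point: rather than assuming $x_0$ lies in the relative interior of $\Gamma$, it is safer to take $x_0$ to be a Lebesgue density point of $\Gamma$ in $\partial\Omega$, which exists whenever $\nu(\Gamma)>0$.
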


\begin{proof}
In the sequel we denote by $B(x,r)$ the open ball of radius $r$ centred in $x\in \R^N$. 
Let $s_0\in\partial\Omega$ and let $\sigma_n \geq 0$ be given by
$$
\sigma_n(s) = 
\Big\{
\begin{array}{l@{\qquad}l}
\alpha_n    &      {\rm if \ } \  s\in\  B(s_0, 2^{-n})\cap \partial\Omega ,  \\
0     &   {\rm elsewhere \ } ,
\end{array}
\Big. 
$$
where $\alpha_n$ is a positive constant chosen so that $\sigma_n \in\Sigma_m$ for all $n\in\N$.  Depending on the dimension we construct a family of test functions $u_n$  as follows:
$$
u_n(x) = \frac{-\log n}{\log(|x-s_0|)}  \qquad   \text{on \ } B(s_0, \frac 1n)\cap\Omega, \qquad u_n \equiv 1 \quad \text{on \ } \Omega\setminus B(s_0, \frac 1n), \quad N=2.
$$
$$
u_n(x) = n\, |x-s_0|   \qquad\qquad \text{on \ } B(s_0, \frac 1n)\cap\Omega, \qquad  u_n \equiv 1 \quad \text{on \ } \Omega\setminus B(s_0, \frac 1n), \quad N\geq 3.
$$
Then $u_n \in H^1(\Omega)$ for all $n\in\N$ and a direct calculation shows that 
$$
\lim_{n\to\infty} Q[\sigma_n, u_n] = 0. 
$$
This proves that $\inf_{\sigma\in \Sigma_m} \lambda_1(\sigma, \Omega) = 0$. 
To show that the infimum is not attained, assume that $\sigma\in \Sigma_m,\, m>0$. Then there exists an $\eps>0$ and $\gamma_\eps \subset \Gamma$ such that $\sigma \geq \eps$ on $\gamma_\eps$ and $\nu(\gamma_\eps)>0$. By the Poincar\'e inequality 
$$
\int_\Omega |\nabla u(x)|^2 \,dx + \eps \int_{\gamma_\eps} |u(s)|^2\, d\nu(s) \,  \geq\, c_\eps\, \|u\|_{L^2(\Omega)}^2
$$
for some  $c_\eps>0$ and all $u\in H^1(\Omega)$. Hence 
$\lambda_1(\sigma, \Omega) >0$ for any $\sigma\in \Sigma_m$ with $m>0$. 
\end{proof}

\noindent The assumption $N\geq 2$ in Proposition \ref{prop-inf1}  is crucial, see Section \ref{ss-1dim} for related results in dimension one. 


\subsection{The supremum} The main object of our interest here is the quantity 
\begin{equation} \label{supinf}
\Lambda_1(m, \Omega) = \sup_{\sigma\in \Sigma_m} \lambda_1(\sigma, \Omega),
\end{equation} 
and the function $\sigma\in \Sigma_m$ which realises the above supremum. It will be showen that, contrary to $\inf_{\sigma\in \Sigma_m} \lambda_1(\sigma, \Omega)$, the supremum $\Lambda_1(m, \Omega)$ is achieved on $\Sigma_m$. 
We will give an explicit characterisation of the maximising $\sigma$, and prove sharp two-sided estimates for the related maximal eigenvalue in terms of $m$ and the volume of $\Omega$. The existence of the maximising $\sigma$ in \eqref{supinf} is related to the following simple observation:  

\medskip

\noindent If $\hat\sigma\in\Sigma_m$ is such that the corresponding minimiser $\hat u$ of $Q[\hat\sigma, \cdot \, ]$ is {\em constant on $\Gamma$}, then $\Lambda_1(m, \Omega) = \lambda_1(\hat\sigma, \Omega)$. Indeed, for any $\sigma\in \Sigma_m$ we then
have
\begin{equation} \label{suff}
\inf_{u \in H^1(\Omega)} Q[\sigma, u] \leq Q[\sigma, \hat u] = Q[\hat\sigma, \hat u] = \lambda_1(\hat\sigma, \Omega).
\end{equation}
It thus suffices to find a suitable candidate for $\hat\sigma$. To do so, we consider the corresponding limiting problem for $\sigma\to \infty$, which is associated with the Laplace operator $-\Delta_D^\Gamma$ in $L^2(\Omega)$ subject to Dirichlet boundary conditions on $\Gamma$ and to Neumann boundary conditions on $\partial\Omega\setminus\Gamma$. More precisely, $-\Delta_D^\Gamma$ is generated by the closed quadratic form 
\begin{equation} \label{dn-form}
\Q_\Gamma[u] = \int_\Omega |\nabla u(x)|^2\, dx, \qquad \D(\Q_\Gamma)= \{ u\in H^1(\Omega)\, : \, u |_\Gamma =0 \}.
\end{equation}
The compactness of the imbedding $\D(\Q_\Gamma) \hookrightarrow L^2(\Omega)$ implies that the spectrum of $-\Delta_D^\Gamma$ is purely discrete. Let  $E_j(\Gamma)$ be the non-decreasing sequence of its eigenvalues and let $\varphi_j$ be the associated normalised eigenfunctions. Hence
\begin{equation} \label{e1}
E_1(\Gamma) = \min_{u\in \D(\Q_\Gamma)} \frac{\Q_\Gamma[u]}{\|u\|_{L^2(\Omega)}^2}.
\end{equation}
If $\Gamma$ has a positive measure, then in view of the Poincar\'e inequality we have $E_1(\Gamma)>0$.
Recall that  the operator domain $\D(-\Delta_D^\Gamma) $ of $-\Delta_D^\Gamma$ satisfies
\begin{equation} \label{domain}
\D(-\Delta_D^\Gamma) \subseteq \big\{ u\in H^2(\Omega)\, : \ u |_\Gamma =0 \  \ \wedge\ \  \partial_n u |_{\partial\Omega\setminus\Gamma} =0 \quad \text{a. e. } \big\}.
\end{equation}
Now for $\xi \in (0,E_1(\Gamma))$ we define 
$$
U_\xi = (-\Delta_D^\Gamma-\xi)^{-1}\,\id,
$$
where $\id$ denotes the function identically equal to $1$ on $\Omega$. Since $(-\Delta_D^\Gamma-\xi)^{-1}$ is positivity preserving, we have $U_\xi >0$ in $\Omega$. Hence from the strong maximum principle and the fact that $(-\Delta_D^\Gamma-\xi)^{-1}$ maps  $L^2(\Omega)$ into $\D(-\Delta_D^\Gamma)$ it follows that
\begin{equation} \label{hopf}
\partial_n U_\xi   \big |_{\Gamma} < 0  \quad \text{and} \quad \partial_n U_\xi   \big |_{\partial\Omega\setminus\Gamma} =0 \qquad a.e.
\end{equation}
Together with $U_\xi$ we introduce the function $F:(0,E_1(\Gamma)) \to \R$ given by
\begin{equation} \label{eq-f}
F(\xi) := \xi^2 \int_\Omega U_\xi\, dx + \xi\, |\Omega|, \qquad \xi \in (0,E_1(\Gamma)),
\end{equation}
where $ |\Omega|$ denotes the volume of $\Omega$.  

\begin{lemma} \label{lem-f}
The function $F$ is a $C^2$ increasing convex bijection from $(0,E_1(\Gamma))$ onto $(0,\infty)$.
\end{lemma}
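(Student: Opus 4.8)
The plan is to diagonalise the resolvent and reduce $F$ to an explicit convergent series of elementary functions, from which all four assertions become transparent. First expand $\id$ in the orthonormal eigenbasis $\{\varphi_j\}$ of $-\Delta_D^\Gamma$: setting $c_j := \int_\Omega \varphi_j\, dx = \langle \id,\varphi_j\rangle_{L^2(\Omega)}$ one has $\sum_j c_j^2 = |\Omega|$ and, for $\xi\in(0,E_1(\Gamma))$,
\[
U_\xi = \sum_j \frac{c_j}{E_j(\Gamma)-\xi}\,\varphi_j, \qquad \int_\Omega U_\xi\,dx = \sum_j \frac{c_j^2}{E_j(\Gamma)-\xi}.
\]
Inserting this into \eqref{eq-f} and using $\dfrac{\xi^2}{E_j(\Gamma)-\xi}+\xi = \dfrac{\xi\,E_j(\Gamma)}{E_j(\Gamma)-\xi}$ yields the closed form
\[
F(\xi) = \sum_j \frac{c_j^2\, E_j(\Gamma)\, \xi}{E_j(\Gamma)-\xi}, \qquad \xi\in(0,E_1(\Gamma)).
\]
Since $\Q_\Gamma$ is an irreducible Dirichlet form on the connected set $\Omega$, the ground state $\varphi_1$ is positive and simple; hence $E_1(\Gamma)<E_2(\Gamma)$ and $c_1=\int_\Omega\varphi_1\,dx>0$.

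Then I would read off the properties. Each summand $\xi\mapsto c_j^2 E_j(\Gamma)\xi/(E_j(\Gamma)-\xi)$ is non-negative, smooth, increasing and convex on $(0,E_1(\Gamma))\subseteq(0,E_j(\Gamma))$, with first derivative $c_j^2 E_j(\Gamma)^2/(E_j(\Gamma)-\xi)^2\ge 0$ and second derivative $2c_j^2 E_j(\Gamma)^2/(E_j(\Gamma)-\xi)^3\ge 0$. For the $C^2$ claim (and the legitimacy of termwise differentiation) it is cleanest to observe that $\xi\mapsto(-\Delta_D^\Gamma-\xi)^{-1}$ is operator-valued holomorphic on the resolvent set, so $F(\xi)=\xi^2\langle\id,(-\Delta_D^\Gamma-\xi)^{-1}\id\rangle+\xi\,|\Omega|$ is real-analytic, in particular $C^2$, on $(0,E_1(\Gamma))$. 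Because the $j=1$ term is strictly increasing and strictly convex while all other terms are non-negative, monotone and convex, $F$ is strictly increasing and convex. For the endpoints: if $\xi\le E_1(\Gamma)/2$ then $E_j(\Gamma)-\xi\ge E_j(\Gamma)/2$, so $0\le F(\xi)\le 2\xi\sum_j c_j^2 = 2|\Omega|\,\xi\to 0$ as $\xi\to 0^+$; and the $j=1$ term $c_1^2 E_1(\Gamma)\xi/(E_1(\Gamma)-\xi)\to+\infty$ as $\xi\to E_1(\Gamma)^-$ since $c_1>0$, so $F(\xi)\to+\infty$. A continuous, strictly increasing function on $(0,E_1(\Gamma))$ with these two limits is a bijection onto $(0,\infty)$.

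The only genuinely delicate point is the regularity, which I would settle through analyticity of $\xi\mapsto(-\Delta_D^\Gamma-\xi)^{-1}$ on $(0,E_1(\Gamma))\subset\rho(-\Delta_D^\Gamma)$ rather than by estimating tails of the eigenfunction expansion. Everything else is elementary once the closed form for $F$ is in hand. An alternative, basis-free route establishes convexity directly: from $\tfrac{d}{d\xi}(-\Delta_D^\Gamma-\xi)^{-1}=(-\Delta_D^\Gamma-\xi)^{-2}$ one gets $g(\xi):=\int_\Omega U_\xi\,dx>0$, $g'(\xi)=\|U_\xi\|_{L^2(\Omega)}^2>0$ and $g''(\xi)=2\langle U_\xi,(-\Delta_D^\Gamma-\xi)^{-1}U_\xi\rangle>0$ (positivity of $(-\Delta_D^\Gamma-\xi)^{-1}$ for $\xi<E_1(\Gamma)$), whence $F''(\xi)=2g(\xi)+4\xi\,g'(\xi)+\xi^2 g''(\xi)>0$ and $F'(\xi)=2\xi\,g(\xi)+\xi^2 g'(\xi)+|\Omega|>|\Omega|>0$.
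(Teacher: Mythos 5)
Your proof is correct. Your primary route differs from the paper's: you diagonalise the resolvent in the eigenbasis of $-\Delta_D^\Gamma$ and reduce $F$ to the explicit series $F(\xi)=\sum_j c_j^2 E_j(\Gamma)\,\xi/(E_j(\Gamma)-\xi)$, from which monotonicity, convexity and both endpoint limits are read off termwise (with the $C^2$ regularity and the legitimacy of termwise differentiation secured by analyticity of the resolvent). The paper instead works basis-free with $g(\xi)=(\id,(-\Delta_D^\Gamma-\xi)^{-1}\id)$, computing $g'=\|U_\xi\|^2>0$ and $g''=2(U_\xi,(-\Delta_D^\Gamma-\xi)^{-1}U_\xi)>0$ from the resolvent identity, and obtains the divergence as $\xi\nearrow E_1(\Gamma)$ from the leading term $(E_1(\Gamma)-\xi)^{-1}\bigl(\int_\Omega\varphi_1\bigr)^2$ of the spectral expansion --- which is exactly the ``alternative, basis-free route'' you sketch at the end, so the two arguments are really the same computation packaged differently. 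Your series version makes one point more explicit than the paper does: the divergence at $E_1(\Gamma)$ requires $c_1=\int_\Omega\varphi_1\,dx\neq 0$, which you justify by positivity of the ground state, whereas the paper uses this tacitly in \eqref{diver}. (The simplicity of $E_1(\Gamma)$ that you invoke is not actually needed; only $c_1>0$ is.) The paper's route avoids any discussion of convergence and interchange of sum and integral, which in your version is delegated to Parseval and resolvent analyticity; both are standard, so the choice is a matter of taste.
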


\begin{proof}
Let 
$$
g(\xi):= \int_\Omega U_\xi\, dx = (\id,\,  (-\Delta_D^\Gamma-\xi)^{-1} \id )_{L^2(\Omega)},
$$ 
so that $F(\xi) = \xi^2 g(\xi)+\xi\, |\Omega|$. From the resolvent identity 
\begin{equation} \label{resolvent}
(-\Delta_D^\Gamma-\xi)^{-1} -(-\Delta_D^\Gamma-\xi')^{-1} = (\xi-\xi')\, (-\Delta_D^\Gamma-\xi)^{-1}(-\Delta_D^\Gamma-\xi')^{-1}
\end{equation}
we easily find out that $g'(\xi) = \|U_\xi\|^2_{L^2(\Omega)}>0$. This shows that $F$ is increasing. Moreover, using \eqref{resolvent} again we get 
$$
g''(\xi) = 2\, \big(U_\xi,\,  (-\Delta_D^\Gamma-\xi)^{-1}\, U_\xi \big)_{L^2(\Omega)} >0,
$$
which implies that $F$ is convex. It remains to show that $F: (0,E_1(\Gamma)) \to(0,\infty)$ is surjective. Obviously, $F(\xi) \to 0$ as $\xi \to 0$. On the other hand, using the explicit expression for the integral kernel of $(-\Delta_D^\Gamma-\xi)^{-1}$ we obtain
\begin{equation} \label{diver}
g(\xi) = \frac{1}{E_1(\Gamma)-\xi}\, \Big(\int_\Omega \varphi_1\, dx\Big)^2 +\mathcal{O}(1), \qquad \xi \nearrow E_1(\Gamma).
\end{equation}
Since $F$ is continuous, it follows that $F$ maps $(0,E_1(\Gamma))$ onto $(0,\infty)$.  
\end{proof}

\medskip

\noindent The above Lemma allows us to introduce the function $\xi$ on $(0,\infty)$ given by 
\begin{equation} \label{implicit}
\xi (m)  = F^{-1}(m), \qquad  m >0. 
\end{equation}
In view of Lemma \ref{lem-f} we easily see that $\xi$ is concave and maps $(0,\infty)$ onto $(0,E_1(\Gamma))$. 

\begin{theorem} \label{thm-sup}
The supremum $\Lambda_1(m, \Omega)=\sup_{\sigma\in \Sigma_m} \lambda_1(\sigma, \Omega)$ is attained for any $m>0$ and satisfies  
\begin{equation} \label{max}
\Lambda_1(m, \Omega)=   \lambda_1(\sigma_m, \Omega) = \xi(m),  \quad \text{where \ \ } \sigma_m = -\xi(m)\, \partial_n U_{\xi(m)}\big |_{\partial\Omega}.
\end{equation}
Moreover, the maximiser $\sigma_m$ is unique in $\Sigma_m$.   
\end{theorem}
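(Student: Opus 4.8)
The plan is to exploit the sufficient condition for optimality recorded just before the statement: if some $\hat\sigma\in\Sigma_m$ has a ground state that is constant on $\Gamma$, then $\Lambda_1(m,\Omega)=\lambda_1(\hat\sigma,\Omega)$, by \eqref{suff}. The function $U_\xi$ is engineered precisely to produce such a $\hat\sigma$. Set $\xi:=\xi(m)\in(0,E_1(\Gamma))$, write $U:=U_\xi$, and put $\psi:=\id+\xi\,U$. Since the positivity preserving resolvent sends $\id\ge0$ to $U\ge0$, we have $\psi\ge1$ on $\overline\Omega$; moreover $\psi\in H^2(\Omega)$ by \eqref{domain}, and from $(-\Delta_D^\Gamma-\xi)U=\id$ one gets $-\Delta\psi=\xi\,\psi$ in $\Omega$. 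On the boundary $U|_\Gamma=0$ forces $\psi|_\Gamma=1$, while $\partial_n\psi=\xi\,\partial_n U$, which by \eqref{hopf} vanishes a.e.\ on $\partial\Omega\setminus\Gamma$ and is strictly negative a.e.\ on $\Gamma$.

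First I would verify that $\sigma_m:=-\xi\,\partial_n U\big|_{\partial\Omega}$ lies in $\Sigma_m$. Nonnegativity and $\mathrm{supp}\,\sigma_m\subseteq\Gamma$ follow from \eqref{hopf} (together with $\partial\Omega\setminus\Gamma$ being relatively open, so $\{\sigma_m\ne0\}\subseteq\Gamma=\overline\Gamma$), and $\sigma_m\in L^1(\partial\Omega)$ since $\partial_n U\in H^{1/2}(\partial\Omega)\subset L^1(\partial\Omega)$. The mass constraint is exactly what the definition of $F$ in \eqref{eq-f} encodes: by the divergence theorem and $-\Delta U=\id+\xi\,U$,
\[
\int_{\partial\Omega}\sigma_m=-\xi\int_{\partial\Omega}\partial_n U=-\xi\int_\Omega\Delta U=\xi\int_\Omega(\id+\xi\,U)=\xi\,|\Omega|+\xi^2\int_\Omega U=F(\xi)=F\bigl(F^{-1}(m)\bigr)=m .
\]
By construction $\partial_n\psi+\sigma_m\,\psi=0$ a.e.\ on $\partial\Omega$ (on $\Gamma$ because $\psi=1$ there, on the rest because both terms vanish), so $\psi$ solves \eqref{euler-lagr} with $\sigma_m$ and eigenvalue $\xi$.

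Next I would identify $\xi$ with $\lambda_1(\sigma_m,\Omega)$. Testing the Rayleigh quotient with $\psi$ and using Green's formula, $-\Delta\psi=\xi\psi$ and $\partial_n\psi=-\sigma_m\psi$ gives $Q[\sigma_m,\psi]=\xi$, hence $\lambda_1(\sigma_m,\Omega)\le\xi$. For the matching lower bound I would use the ground state substitution $u=\psi\,w$, which is legitimate because $\psi\ge1$: a computation relying on $\psi\in H^2(\Omega)$, the equation for $\psi$ and its boundary condition yields, for every admissible $u\in H^1(\Omega)$,
\[
\int_\Omega|\nabla u|^2+\int_{\partial\Omega}\sigma_m|u|^2=\int_\Omega\psi^2|\nabla w|^2+\xi\,\|u\|_{L^2(\Omega)}^2\ \ge\ \xi\,\|u\|_{L^2(\Omega)}^2 ,
\]
with equality only when $w$ is constant. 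Therefore $\lambda_1(\sigma_m,\Omega)=\xi=\xi(m)$ and $\psi$ is (up to a scalar) its unique minimiser. Since $\psi$ is constant on $\Gamma$, \eqref{suff} applies: for every $\sigma\in\Sigma_m$ we get $\lambda_1(\sigma,\Omega)\le Q[\sigma,\psi]=Q[\sigma_m,\psi]=\xi(m)$, where the middle equality uses $\psi|_\Gamma=1$ and $\mathrm{supp}\,\sigma,\mathrm{supp}\,\sigma_m\subseteq\Gamma$ to conclude $\int_{\partial\Omega}\sigma|\psi|^2=m=\int_{\partial\Omega}\sigma_m|\psi|^2$. Hence $\Lambda_1(m,\Omega)=\xi(m)=\lambda_1(\sigma_m,\Omega)$, attained at $\sigma_m$.

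For uniqueness, suppose $\tilde\sigma\in\Sigma_m$ also satisfies $\lambda_1(\tilde\sigma,\Omega)=\xi(m)$. The inequality chain above then collapses to $Q[\tilde\sigma,\psi]=\xi(m)=\lambda_1(\tilde\sigma,\Omega)$, so $\psi$ is a minimiser of $Q[\tilde\sigma,\cdot\,]$ and consequently satisfies the Euler--Lagrange boundary identity $\partial_n\psi+\tilde\sigma\,\psi=0$ a.e.\ on $\partial\Omega$; comparing with $\partial_n\psi+\sigma_m\,\psi=0$ and dividing by $\psi\ge1>0$ gives $\tilde\sigma=\sigma_m$ a.e. The main technical obstacle I anticipate is making the integration by parts in the ground state substitution (and the evaluation $Q[\sigma_m,\psi]=\xi$) fully rigorous, because $\sigma_m=-\xi\,\partial_n U$ is only as smooth as $\partial_n U$ near the relative boundary $\partial\Gamma\subset\partial\Omega$, where the mixed Dirichlet--Neumann problem may be singular; this is handled by an approximation/truncation argument, using $\psi\in H^2(\Omega)$ and $\psi\ge1$. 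The qualitative inputs $U_\xi>0$ and the Hopf-type relations \eqref{hopf} are already provided by the preliminaries.
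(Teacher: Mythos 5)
Your proposal follows the paper's strategy almost exactly: you construct the same candidate $\sigma_m=-\xi(m)\,\partial_n U_{\xi(m)}$, verify the mass constraint via Green's formula and $F(\xi(m))=m$, exploit the fact that the associated eigenfunction $\psi=\id+\xi U_{\xi}$ (the paper's $u_m$) equals $1$ on $\Gamma$ to run the comparison \eqref{suff}, and obtain uniqueness from the Euler--Lagrange boundary identity. The one place where you genuinely diverge is the verification that $\xi(m)$ is the \emph{lowest} eigenvalue of $Q[\sigma_m,\cdot\,]$: you use a ground-state substitution $u=\psi w$ to get the pointwise-in-$u$ identity $Q$-form $=\int\psi^2|\nabla w|^2+\xi\|u\|^2$, whereas the paper argues by contradiction — if $\lambda_1(\sigma_m,\Omega)<\xi(m)$, the positive minimiser from Lemma \ref{lem-minimiser} would have to be $L^2$-orthogonal to the positive function $u_m$. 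Your route has the advantage of also delivering uniqueness of the minimiser for free, but it carries the extra burden you yourself flag: justifying the integration by parts for $w=u/\psi$ with only $\psi\in H^2(\Omega)$ (note that in dimensions $N\ge 4$ one does not automatically have $\nabla\psi\in L^\infty$, so $\psi\nabla w$ need not obviously lie in $L^2$ for arbitrary $u\in H^1$, and the approximation argument needs to be spelled out). The paper's orthogonality argument sidesteps this entirely, which is why it is the cleaner choice here; otherwise the two proofs are interchangeable.
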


\begin{proof} 
As mentioned above, to prove that $\sigma_m$ is a maximiser it suffices to show that the minimiser of the functional $Q[\sigma_m, \cdot]$ is constant on $\Gamma$. 
Recall that $U_\xi\in H^2(\Omega)$. From the imbedding \eqref{trace-imbed} we find that $\sigma_m\in L^1(\partial\Omega)$. Moreover, since $-\Delta U_{\xi(m)} = \xi(m) U_{\xi(m)}+1$, the Green formula and equations \eqref{hopf}, \eqref{implicit} yield
\begin{align*}
\int_{\partial\Omega} \sigma_m\, d\nu & = -\int_{\partial\Omega} \partial_n U_{\xi(m)}\, d\nu = -\xi(m) \int_\Omega \Delta U_{\xi(m)}\, dx = F(\xi(m)) =m.
\end{align*}
This in combination with \eqref{hopf} shows that $\sigma_m\in\Sigma_m$.  Next we define
$$
u_m := \xi(m)\, U_{\xi(m)} + 1,
$$
so that
\begin{equation} \label{eq-ut}
-\Delta \, u_m = \xi(m)\, u_m \quad \text{in \ } \Omega, \qquad \partial_n u_m + \sigma_m =0 \quad \text{a.e. \ \ on \ } \partial\Omega.
\end{equation}
We claim that $u_m$ is a minimiser of $Q[\sigma_m, \cdot\, ]$. Indeed, by \eqref{eq-ut} we have $Q[\sigma_m, u_m] = \xi(m)$. Assume that $\lambda_1(\sigma_m, \Omega) < \xi(m)$. Then in view of Lemma \ref{lem-minimiser}  there exists a positive minimiser $\psi$ of $Q[\sigma_m, \cdot\, ]$ which satisfies equation \eqref{euler-lagr} with $\sigma=\sigma_m$. This in combination with \eqref{eq-ut} and integration by parts implies that $(\psi, u_m)_{L^2(\Omega)}=0$, which is in contradiction with the positivity of $\psi$ and $u_m$. We thus conclude that $\lambda_1(\sigma_m, \Omega) = \xi(m)$.

To show that  $\Lambda_1(m, \Omega)= \lambda_1(\sigma_m, \Omega)$ pick an arbitrary $\sigma\in\Sigma_m$. As already pointed out in \eqref{suff}, using \eqref{lambda} and the fact that $u_m = 1$ on $\Gamma$ we obtain
\begin{equation} \label{test-f}
\lambda_1(\sigma, \Omega)   \leq  Q[\sigma, u_m] = 
Q[\sigma_m, u_m] = \lambda_1(\sigma_m, \Omega).
\end{equation}
It remains to show the uniqueness of $\sigma_m$. To this end suppose that $\lambda_1(\bar\sigma, \Omega)=\lambda_1(\sigma_m, \Omega)$ for some $\bar\sigma\in\Sigma_m$. By the same argument used in \eqref{test-f} we find out that $u_m$ is a minimiser of $Q[\bar\sigma, \cdot\, ]$:
$$
\lambda_1(\sigma_m, \Omega)  =\lambda_1(\bar\sigma, \Omega) \leq  Q[\bar\sigma, u_m] = Q[\sigma_m, u_m] = \lambda_1(\sigma_m, \Omega).
$$
By Lemma \ref{lem-minimiser} it follows that $u_m$ satisfies the Euler-Lagrange equation \eqref{eq-ut} with $\sigma_m$ replaced by $\bar\sigma$. Hence $\bar\sigma= -\partial_n u_m  |_{\Gamma} = \sigma_m$ almost everywhere on $\Gamma$. 
\end{proof}

\begin{remark}
A slightly different optimising problem for two-dimensional domains was studied in \cite{cu}, where the authors addressed the question on which part of boundary one has to impose Dirichlet boundary conditions to minimise or maximise the lowest eigenvalue of a mixed Dirichlet-Neumann boundary value problem.
\end{remark}


\subsection{The case $N=1$} \label{ss-1dim}
In the case of dimension one we have $\Omega = (a,b)$ and 
$$
\Sigma_m = \{ \sigma = (\sigma(a), \sigma(b))\, : \ \sigma(a),\, \sigma(b) \geq 0 \ \wedge \ \sigma(a)+\sigma(b) =m \}.
$$
Since Theorem \ref{thm-sup} holds true in any dimension, the maximiser of $\lambda_1(\cdot, \Omega)$ is given by 
that $\sigma$ for which the associated minimiser $u$ in \eqref{lambda} satisfies $u(a)=u(b)$. In other words
$$
\sigma_m =\Big (\frac m2, \frac m2\Big).
$$ 

\smallskip

\noindent On the other hand, the claim of Proposition \ref{prop-inf1} fails if $N=1$ since the capacity of a point on one-dimensional bounded intervals is positive. Consequently, the functional $\lambda_1(\sigma, \Omega)$ admits minimisers on $\Sigma_m$ and the resulting minimum is positive for any $m>0$. 

\begin{proposition} \label{lem-1dim}
Let $\Omega=(a,b)$. Then for any $m>0$ it holds
\begin{equation} \label{eq-1dim}
\inf_{\sigma\in \Sigma_m} \lambda_1(\sigma, \Omega) = \lambda_1(\sigma_1, \Omega) =\lambda_1(\sigma_2, \Omega) \geq \frac 14 \, \big(b-a +\frac{1}{2m}\big)^{-2},
\end{equation}
where $\sigma_1 = (m,0)$ and $\sigma_2 = (0,m)$.
\end{proposition}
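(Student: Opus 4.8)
The plan is to split the statement into two parts: the identification of the infimum, and the quantitative lower bound. For the first part I would use that $\sigma\mapsto\lambda_1(\sigma,\Omega)$ is \emph{concave}. Indeed, for each fixed $u\in H^1(\Omega)$ with $\|u\|_{L^2(\Omega)}=1$ the quotient $Q[\sigma,u]$ is an affine function of $\sigma$, so by \eqref{lambda} the map $\sigma\mapsto\lambda_1(\sigma,\Omega)$ is an infimum of affine functions and hence concave on the segment $\Sigma_m$. Parametrising $\Sigma_m$ by $\alpha\in[0,m]$, $\sigma=(\alpha,m-\alpha)$, and setting $f(\alpha):=\lambda_1((\alpha,m-\alpha),\Omega)$, the function $f$ is concave on $[0,m]$; moreover the reflection $x\mapsto a+b-x$ is an isometry of $(a,b)$ that interchanges its two endpoints, whence $f(\alpha)=f(m-\alpha)$, so in particular $f(0)=f(m)$, i.e. $\lambda_1(\sigma_1,\Omega)=\lambda_1(\sigma_2,\Omega)$. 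Writing any $\alpha\in(0,m)$ as a convex combination of $0$ and $m$ and invoking concavity gives $f(\alpha)\ge\min\{f(0),f(m)\}=f(0)$, so the infimum over $\Sigma_m$ is attained, at $\sigma_1$ and at $\sigma_2$, and equals this common value.

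For the lower bound it then suffices to estimate $\lambda_1(\sigma_1,\Omega)$ with $\sigma_1=(m,0)$ from below. By Lemma \ref{lem-minimiser} the associated minimiser $u$ is positive and solves $-u''=\lambda\,u$ on $(a,b)$, with $\lambda=\lambda_1(\sigma_1,\Omega)>0$ by the Poincar\'e inequality, together with the Neumann condition at $b$ and $-u'(a)+m\,u(a)=0$ at $a$. Imposing $u'(b)=0$ forces $u(x)=\cos(\sqrt\lambda\,(b-x))$ up to a constant, and the condition at $a$ then yields the eigenvalue equation $\sqrt\lambda\,\tan(\sqrt\lambda\,(b-a))=m$. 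Since $\lambda$ is the smallest eigenvalue, $k:=\sqrt\lambda$ is the unique zero of $k\mapsto k\tan(k(b-a))-m$ in the interval $(0,\pi/(2(b-a)))$, and $k\mapsto k\tan(k(b-a))$ is a strictly increasing bijection of that interval onto $(0,\infty)$.

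Consequently, to obtain $k\ge k_*$ for a chosen $k_*\in(0,\pi/(2(b-a)))$ it is enough to verify $k_*\tan(k_*(b-a))\le m$. Taking $k_*:=m/(2m(b-a)+1)$ one has $k_*(b-a)=\frac{m(b-a)}{2m(b-a)+1}<\tfrac12<\tfrac\pi4$, hence $\tan(k_*(b-a))<1$, and therefore $k_*\tan(k_*(b-a))<k_*<m$; thus $\lambda_1(\sigma_1,\Omega)=k^2\ge k_*^2=\frac{m^2}{(2m(b-a)+1)^2}=\tfrac14\bigl(b-a+\tfrac1{2m}\bigr)^{-2}$, which is \eqref{eq-1dim}. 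The only genuinely delicate point is this second step: one must pin down $\lambda_1(\sigma_1,\Omega)$ as the first branch of the transcendental equation — ruling out non-positive eigenvalues and confirming the cosine form of the eigenfunction — and the constant $\tfrac1{2m}$ in the statement is exactly what makes $k_*(b-a)<\tfrac12$ and hence $\tan(k_*(b-a))<1$ work. A softer alternative avoiding the transcendental equation altogether is to extend $u\in H^1(a,b)$ affinely to a function in $H^1(a-\tfrac1m,b)$ vanishing at $a-\tfrac1m$ and to compare $Q[\sigma_1,u]$ with the Dirichlet--Neumann Rayleigh quotient on $(a-\tfrac1m,b)$, whose lowest eigenvalue $\pi^2/(4(b-a+\tfrac1m)^2)$ already dominates $\tfrac14(b-a+\tfrac1{2m})^{-2}$ because $\pi>2$.
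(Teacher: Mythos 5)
Your argument is correct, but it reaches the conclusion by a genuinely different route in both halves. For the identification of the minimisers, the paper does not use concavity: it compares Rayleigh quotients directly, first establishing the monotonicity relation \eqref{aux} between the ordering of $\sigma(a),\sigma(b)$ and that of the trace values $u_\sigma(a),u_\sigma(b)$ of the minimiser, and then plugging $u_\sigma$ (or its reflection $\hat u_\sigma$) into $Q[\sigma_1,\cdot\,]$ to get $\lambda_1(\sigma_1,\Omega)\le\lambda_1(\sigma,\Omega)$. Your observation that $\lambda_1(\cdot,\Omega)$ is an infimum of functions affine in $\sigma$, hence concave on the segment $\Sigma_m$, combined with the reflection symmetry $f(\alpha)=f(m-\alpha)$, is cleaner and equally rigorous (in one dimension $H^1(a,b)\hookrightarrow C[\overline{a,b}]$, so $Q[\sigma,u]$ is finite and genuinely affine in $\sigma$ for every normalised $u$); what it does not immediately give is the paper's remark that $\sigma_1,\sigma_2$ are the \emph{only} minimisers, but that is not part of the statement. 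For the lower bound, the paper completes a square, i.e.\ expands $\int_a^b\bigl(u'-\tfrac{u}{2(x-a+\frac{1}{2m})}\bigr)^2\,dx\ge 0$ and integrates by parts to obtain a one-dimensional Hardy inequality with boundary term $m\,u^2(a)$ --- the same ground-state substitution technique that reappears in Lemma \ref{hardy-convex} for convex domains. Your two alternatives (solving the transcendental equation $k\tan(k(b-a))=m$ and checking $k_*\tan(k_*(b-a))<k_*<m$ for $k_*=\tfrac12(b-a+\tfrac1{2m})^{-1}$; or extending $u$ affinely to $(a-\tfrac1m,b)$ and comparing with the Dirichlet--Neumann eigenvalue $\pi^2/(4(b-a+\tfrac1m)^2)$, which dominates the claimed bound since $\pi>2$) are both valid; the ODE route even shows the constant in \eqref{eq-1dim} is not sharp, while the extension argument is the softest and avoids the Euler--Lagrange analysis of Lemma \ref{lem-minimiser} altogether.
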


\begin{proof}
We start by proving that $\sigma_1$ and $\sigma_2$ are minimisers of $\lambda_1(\cdot, \Omega)$ on $\Sigma_m$. 
Given $u\in H^1(a,b)$ we set $\hat u(x) =  u(a+b-x)$. Let $\sigma\in\Sigma_m$ and denote by $u_\sigma$ the positive 
normalised minimiser of $Q[\sigma, \cdot\, ]$. It is easily seen that 
\begin{equation} \label{aux}
\sigma(a) \leq \sigma (b) \ \Rightarrow \ u_\sigma(a) \geq u_\sigma(b), \qquad \sigma(a) > \sigma (b) \ \Rightarrow \ u_\sigma(a) \leq u_\sigma(b)
\end{equation}
This follows from the fact that if $\sigma(a) \leq \sigma (b)$ and $u$ is such that $0\leq u(a) < u(b)$, then $Q[\sigma, u] >Q[\sigma, \hat u]$. The same argument proves the second implication in \eqref{aux}.  

Assume first that $\sigma(a) > \sigma (b)$. Then in view of \eqref{aux} and the fact that $\sigma(a)+\sigma(b)=m$ we get 
\begin{align*}
\lambda_1(\sigma_1, \Omega)  \leq Q[\sigma_1, u_\sigma] &= \lambda_1(\sigma, \Omega) + m \, u_\sigma^2(a) -\sigma(a)\, u_\sigma^2(a)-\sigma(b)\, u_\sigma^2(b)  \leq  \lambda_1(\sigma, \Omega).
\end{align*}
On the other hand, if $\sigma(a) \leq \sigma (b)$, then again with the help of \eqref{aux} it follows that 
\begin{align*}
\lambda_1(\sigma_1, \Omega)  \leq Q[\sigma_1, \hat u_\sigma] &= \lambda_1(\sigma, \Omega) + m \, u_\sigma^2(b) -\sigma(a)\, u_\sigma^2(a)-\sigma(b)\, u_\sigma^2(b)  \leq  \lambda_1(\sigma, \Omega).
\end{align*}
Hence $\sigma_1$ is a minimiser of $\lambda_1(\cdot, \Omega)$. The proof for $\sigma_2$ is completely analogous. Obviously,  $\lambda_1(\sigma_1, \Omega)= \lambda_1(\sigma_2, \Omega)$.  To prove the inequality in \eqref{eq-1dim} we note that for any $u\in H^1(a,b)$ it holds
\begin{align*}
\int_a^b \Big(u'(x) -\frac{u(x)}{2(x-a+\frac{1}{2m})} \Big)^2\, dx & = \int_a^b |u'(x)|^2\, dx +m\, u^2(a) - \frac{u^2(b)}{2(b-a+\frac{1} {2m})} \\
& \quad - \frac 14\, \int_a^b \frac{u^2(x)}{(x-a+\frac{1}{2m})^2} \, dx,
\end{align*}
where we have integrated by parts to evaluate the mixed term. It follows that 
$$
\int_a^b |u'(x)|^2\, dx +m\, u^2(a)  \, \geq \frac 14\, \int_a^b \frac{u^2(x)}{(x-a+\frac{1}{2m})^2} \, dx \qquad \forall\ u\in H^1(a,b),
$$
which yields the sought lower bound in \eqref{eq-1dim}.
\end{proof}

\begin{remark}
It is clear from the proof of Proposition \ref{lem-1dim} that $\sigma_1$ and $\sigma_2$ are the only minimisers of $\lambda_1(\cdot, \Omega)$. Note also that the explicit form of $\sigma_1$ and $\sigma_2$ is reminiscent of the properties os the sequence $\sigma_n$ used in the proof of Proposition \ref{prop-inf1}. 
\end{remark}


\subsection{\bf The maximal eigenvalue $\Lambda_1(m, \Omega)$}

Theorem \ref{thm-sup} gives us information about the asymptotic behaviour of $\Lambda_1(m, \Omega)$ for $m\to 0$ as well as for $m\to \infty$. Indeed by \eqref{diver} 
\begin{equation} \label{m-infty}
\lim_{m\to\infty} \Lambda_1(m, \Omega) = E_1(\Gamma).
\end{equation}

\noindent Moreover, by Lemma \ref{lem-f} it follows that $\Lambda_1(m,\Omega)$ is a concave increasing function of $m$.  As for the behaviour of $\Lambda_1(m)$ for small values of $m$, by using a test function equal to a constant we see that $\Lambda_1(m, \Omega)\to 0$ when $m\to 0$. Moreover, by the resolvent equation \eqref{resolvent} and \eqref{eq-f} we get
$$
F(\xi) = \xi \, |\Omega| + \xi^2 \int_\Omega U_0(x)\, dx + o(\xi^2) \qquad \xi\to 0.
$$
In view of \eqref{implicit} and \eqref{max} we then get
\begin{equation} \label{m-zero}
\Lambda_1(m, \Omega) = m\  |\Omega|^{-1} + o(m), \qquad m\to 0+.
\end{equation}

\smallskip

\noindent A natural question is how to estimate $\Lambda_1(m, \Omega)$ for a fixed value of $m$.  It turns out that to this end it is not convenient to use directly the equation for $\Lambda_1(m, \Omega)$ given by Theorem \ref{thm-sup}, because we have very little information about the function $U_\xi$. Instead, we are going to employ merely
the fact that the corresponding minimiser is constant on $\Gamma$.

\begin{proposition} \label{prop-ub}
For any $m>0$ it holds
\begin{equation} \label{eq-lowerb}
\Lambda_1(m, \Omega) \, \geq \, \frac{m\, E_1(\Gamma)}{m+ |\Omega|\, E_1(\Gamma)} .
\end{equation}
\end{proposition}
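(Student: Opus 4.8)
The plan is to avoid the function $U_\xi$ altogether and work instead with the explicit maximiser $\sigma_m$ and its minimiser $u_m$ furnished by Theorem~\ref{thm-sup}. Recall that $u_m=\xi(m)\,U_{\xi(m)}+1$ is a positive minimiser of $Q[\sigma_m,\cdot\,]$ with $u_m\equiv 1$ on $\Gamma$, and $Q[\sigma_m,u_m]=\lambda_1(\sigma_m,\Omega)=\Lambda_1(m,\Omega)$. Since $\mathrm{supp}\,\sigma_m\subseteq\Gamma$ and $u_m=1$ on $\Gamma$, the boundary term in $Q[\sigma_m,u_m]$ equals $\int_{\partial\Omega}\sigma_m\,d\nu=m$, so, writing $\xi:=\Lambda_1(m,\Omega)$, I get the identity $\xi\,\|u_m\|_{L^2(\Omega)}^2=\int_\Omega|\nabla u_m|^2\,dx+m$.

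The key step is a lower bound on the Dirichlet energy of $u_m$. Since $u_m-1$ vanishes on $\Gamma$, it belongs to $\D(\Q_\Gamma)$, so by the variational characterisation \eqref{e1} of $E_1(\Gamma)$,
$$
\int_\Omega|\nabla u_m|^2\,dx=\Q_\Gamma[u_m-1]\ \geq\ E_1(\Gamma)\,\|u_m-1\|_{L^2(\Omega)}^2=E_1(\Gamma)\Big(\|u_m\|_{L^2(\Omega)}^2-2\!\int_\Omega u_m\,dx+|\Omega|\Big).
$$
Combining this with the identity of the first paragraph and abbreviating $E:=E_1(\Gamma)$, $b:=\|u_m\|_{L^2(\Omega)}^2$, $c:=\int_\Omega u_m\,dx$, I obtain $(E-\xi)\,b+E\,|\Omega|+m\leq 2E\,c$.

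To finish I would eliminate $b$ and $c$. By Cauchy--Schwarz $c\leq b^{1/2}|\Omega|^{1/2}$, and by Lemma~\ref{lem-f} we have $\xi=\xi(m)<E_1(\Gamma)=E$, hence $E-\xi>0$; the arithmetic--geometric mean inequality in the form $2E\,b^{1/2}|\Omega|^{1/2}\leq (E-\xi)\,b+\tfrac{E^2|\Omega|}{E-\xi}$ then gives $(E-\xi)\,b+E|\Omega|+m\leq (E-\xi)\,b+\tfrac{E^2|\Omega|}{E-\xi}$. Cancelling $(E-\xi)\,b$ and rearranging $(E|\Omega|+m)(E-\xi)\leq E^2|\Omega|$ yields $mE\leq\xi\,(m+|\Omega|E)$, which is exactly \eqref{eq-lowerb}. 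The only place where information is genuinely discarded is the Cauchy--Schwarz estimate for $\int_\Omega u_m\,dx$, and since we have essentially no structural control on $u_m$ this seems to be the unavoidable obstacle; nevertheless the resulting bound still reproduces the correct asymptotics $\Lambda_1(m,\Omega)\sim m/|\Omega|$ as $m\to0$ and $\Lambda_1(m,\Omega)\to E_1(\Gamma)$ as $m\to\infty$. (The same inequality can alternatively be derived from the operator bound $(\id,(-\Delta_D^\Gamma-\xi)^{-1}\id)_{L^2(\Omega)}\leq |\Omega|/(E_1(\Gamma)-\xi)$ together with the identity $m=F(\xi(m))$, but the argument above is more elementary.)
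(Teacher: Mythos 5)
Your argument is correct and is essentially the paper's proof in a different parametrisation: both exploit that the optimal eigenfunction is constant on $\Gamma$, apply the variational characterisation \eqref{e1} to $u-(\text{const})$, use Cauchy--Schwarz on $\int_\Omega u$, and optimise over one free parameter (you optimise over $\|u_m\|_{L^2}^2$ via AM--GM where the paper normalises and minimises over the boundary value $k$). No gap; the steps all check out, including the needed positivity of $E_1(\Gamma)-\xi(m)$.
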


\begin{proof}
Since $\Lambda_1(m, \Omega)=\lambda_1(\sigma_m, \Omega)$ admits a normalised eigenfunction which is constant on $\Gamma$, by Theorem \ref{thm-sup}, we have 
$$
\Lambda_1(m, \Omega) = \inf_{u\in \F} Q[\sigma_m, u],
$$
where 
\begin{equation}  \label{subspace}
\F = \big\{ u\in H^1(\Omega):\ \|u\|_{L^2(\Omega)}=1, \ \exists \, k \geq 0\, : \, u\big |_\Gamma = k\big\}.
\end{equation}
Now let $u\in\F$ and let $k$ be the corresponding constant in \eqref{subspace}. Then
\begin{align*}
Q[\sigma_m, u] & = \int_\Omega |\nabla u|^2\, dx + m\, k^2 = \int_\Omega |\nabla (u-k)|^2\, dx + m\, k^2 \\
& \geq E_1(\Gamma)  \int_\Omega |u-k|^2\, dx + m\, k^2 = E_1(\Gamma) (1- 2 k \int_\Omega u\, dx + k^2\, |\Omega| ) +m\, k^2,
\end{align*}
where we have used the fact that the function $u-k$ belongs to the form domain $\D(\Q_\Gamma)$ of the operator $-\Delta_D^\Gamma$, see \eqref{dn-form}. By the Cauchy-Schwarz inequality we have $|\int_\Omega u\, dx| \leq \sqrt{|\Omega|}$. Consequently 
$$
Q[\sigma_m, u] \, \geq\,  E_1(\Gamma) (1-  k \sqrt{|\Omega|})^2 +m\, k^2.
$$
Minimising the right hand side with respect to $k$ then gives 
$$
Q[\sigma_m, u] \, \geq \, \frac{m\, E_1(\Gamma)}{m+ |\Omega|\, E_1(\Gamma)} \qquad \forall\ u\in\F.
$$
This yields \eqref{eq-lowerb}. 
\end{proof}

\noindent In order to estimate $\Lambda_1(m, \Omega)$ from above by a quantity comparable with the lower bound \eqref{eq-lowerb} we employ a test function which results from an ''interpolation'' between a constant function and the eigenfunction $\varphi_1$ of $-\Delta_D^\Gamma$ relative to $E_1(\Gamma)$. 

\begin{proposition}  \label{prop-lb}
For any $m>0$ it holds
\begin{equation} \label{eq-upperb}
\Lambda_1(m, \Omega) \, \leq  \frac{2\, m\,
E_1(\Gamma)}{m+|\Omega|\, E_1(\Gamma) 
+\sqrt{(|\Omega|\, E_1(\Gamma) -m)^2 +4\, \gamma_1^2\, m\, E_1(\Gamma)}}\, ,
\end{equation}
where $\gamma_1 = \int_\Omega \varphi_1$. 
\end{proposition}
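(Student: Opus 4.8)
The plan is to estimate $\Lambda_1(m,\Omega)$ from above by testing the Rayleigh quotient $Q[\sigma_m,\cdot\,]$ on a one-parameter family of trial functions. By Theorem \ref{thm-sup} we have $\Lambda_1(m,\Omega)=\lambda_1(\sigma_m,\Omega)=\inf_{u\in H^1(\Omega)}Q[\sigma_m,u]$, so every $u\in H^1(\Omega)$ produces an upper bound. I would take
$$
u_t := \varphi_1 + t, \qquad t>0,
$$
where $\varphi_1>0$ is the $L^2$-normalised first eigenfunction of $-\Delta_D^\Gamma$ associated with $E_1(\Gamma)$. The virtue of this choice is that $\varphi_1\in\D(\Q_\Gamma)$, so its trace vanishes on $\Gamma$; since ${\rm supp}\,\sigma_m\subseteq\Gamma$, this gives $u_t = t$ on ${\rm supp}\,\sigma_m$ and hence $\int_{\partial\Omega}\sigma_m\,|u_t|^2\,d\nu = t^2\int_{\partial\Omega}\sigma_m\,d\nu = m\,t^2$. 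Moreover $\nabla u_t = \nabla\varphi_1$, so $\int_\Omega|\nabla u_t|^2 = \Q_\Gamma[\varphi_1] = E_1(\Gamma)$, while $\|u_t\|_{L^2(\Omega)}^2 = 1 + 2\gamma_1 t + |\Omega|\,t^2$ with $\gamma_1 = \int_\Omega\varphi_1 > 0$ (positivity of the ground state $\varphi_1$ enters precisely here). Consequently
$$
\Lambda_1(m,\Omega)\ \le\ h(t) := \frac{E_1(\Gamma)+m\,t^2}{1+2\gamma_1 t+|\Omega|\,t^2}, \qquad t>0 .
$$

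Next I would optimise in $t$. Setting $h'(t)=0$ and clearing denominators reduces to the quadratic equation
$$
m\,\gamma_1\,t^2 + \big(m-|\Omega|\,E_1(\Gamma)\big)\,t - E_1(\Gamma)\,\gamma_1 = 0 ,
$$
whose product of roots is negative, hence it has exactly one positive root
$$
t_* = \frac{|\Omega|\,E_1(\Gamma)-m+\sqrt{(|\Omega|\,E_1(\Gamma)-m)^2+4\gamma_1^2\,m\,E_1(\Gamma)}}{2\,m\,\gamma_1}.
$$
At this critical point the identity $h'(t_*)=0$, i.e.\ $N'M=NM'$ for $h=N/M$, can be rephrased as $h(t_*) = \dfrac{m\,t_*}{\gamma_1+|\Omega|\,t_*}$, which spares us from substituting $t_*$ directly into the original quotient and keeps the bookkeeping light.

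Finally, inserting the closed form of $t_*$ into $m\,t_*/(\gamma_1+|\Omega|\,t_*)$ and simplifying — writing $D$ for the square root above, using $D^2 = (|\Omega|E_1(\Gamma)-m)^2 + 4\gamma_1^2\,m\,E_1(\Gamma)$ together with $(|\Omega|E_1(\Gamma)-m)+m = |\Omega|E_1(\Gamma)$ — collapses the expression exactly to the right-hand side of \eqref{eq-upperb}. The only genuinely delicate point is this last algebraic reduction; computing the three ingredients of $Q[\sigma_m,u_t]$ and the critical $t_*$ is routine. As a sanity check one may verify that the right-hand side of \eqref{eq-upperb} never exceeds $\min\{E_1(\Gamma),\,m/|\Omega|\}$, which are the limiting values of $h$ as $t\to 0^+$ and $t\to\infty$, so that $t_*$ indeed delivers the minimum of $h$; and one records $t_*>0$, so $u_t\in\F$ is admissible — though, since $\Lambda_1(m,\Omega)=\inf_{u\in H^1(\Omega)}Q[\sigma_m,u]$ by Theorem \ref{thm-sup}, the set $\F$ is not even needed here.
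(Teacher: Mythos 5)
Your proof is correct and follows essentially the same route as the paper: both test the Rayleigh quotient on the two-dimensional span of $\varphi_1$ and the constant function (your $u_t=\varphi_1+t$ is just a reparametrisation of the paper's $f_t=|\Omega|(1-t)\varphi_1+\gamma_1 t$), use that the trace vanishes on $\Gamma$ to reduce the boundary term to $m$ times a constant squared, and optimise over the parameter, exploiting the identity $h(t_*)=N'(t_*)/M'(t_*)$ at the critical point exactly as the paper does. The algebraic reduction of $m t_*/(\gamma_1+|\Omega|t_*)$ to the stated right-hand side does check out.
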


\begin{proof}
Let $\sigma\in \Sigma_m$. We consider a family of test functions given by 
\begin{equation} \label{test-function}
f_t(x) = |\Omega|\, \, (1-t)\, \varphi_1(x) + \gamma_1\, t, \qquad t\geq 0.
\end{equation}
Then $f_t\in H^1(\Omega)$ for all
$t\geq 0$.
A direct calculation shows that 
$$
Q[\sigma, f_t] =
\frac{E_1(\Gamma)\,  |\Omega|\, \gamma_1^{-2} (1-t)^2 +m\, |\Omega|^{-1}\, t^2}{1+(|\Omega|\, \gamma_1^{-2}
-1)(1-t)^2}
$$ 
attains its minimum at
$$
t_0 = \frac{E_1(\Gamma) \,|\Omega|\, +m\,
-\sqrt{(|\Omega|\, E_1(\Gamma)\, -m)^2 +4\,
\gamma_1^2\, m\, E_1(\Gamma)}}{2\,
(|\Omega|-\gamma_1^2)\, m\, |\Omega|^{-1}}.
$$
Since $t_0$ solves the equation
$$
|\Omega|\, E_1(\Gamma)\, (1-t)^2 = t\, m\,
|\Omega|^{-1}\,  \big(|\Omega|-t\, (|\Omega|-\gamma_1^2)\big)\, (1-t),
$$
we find out that
$$
\lambda_1(\sigma, \Omega) \leq Q[\sigma, f_{t_0}]= t_0\, m\,
|\Omega|^{-1} = \frac{2\, m\,
E_1(\Gamma)}{m+|\Omega|\, E_1(\Gamma) 
+\sqrt{(|\Omega|\, E_1(\Gamma) -m)^2 +4\, \gamma_1^2\, m\, E_1(\Gamma)}}.
$$
\end{proof}

\begin{remark}
The right hand side of \eqref{eq-upperb} is obviously larger than the right hand side of \eqref{eq-lowerb} since 
$$
\gamma_1^2 =  |\Omega| -\frac 12\, \int_\Omega\int_\Omega |\varphi_1(x)-\varphi_1(x')|^2\, dx\, dx'.
$$
Hence the upper bound \eqref{eq-upperb} coincides with the lower bound \eqref{eq-lowerb} if and only if $\varphi_1$ is constant, in other words if and only if $\Gamma=\emptyset$, in which case we have $E_1(\Gamma)=\Lambda_1(m, \Omega)=0$.
\end{remark}

\begin{remark}
Note that in view of \eqref{m-infty} and \eqref{m-zero} the estimates \eqref{eq-lowerb}, \eqref{eq-upperb} are sharp in the limit $m\to \infty$ as well as in the limit $m\to 0$.
\end{remark}

\begin{corollary} \label{cor-lambdam}
We have
$$
 \frac{m\, E_1(\Gamma)}{m+ |\Omega|\, E_1(\Gamma)} \, \leq \, \Lambda_1(m, \Omega) \, \leq \,  \frac{2\, m\, E_1(\Gamma)}{m+ |\Omega|\, E_1(\Gamma)} \qquad \forall\ m>0.
$$
\end{corollary}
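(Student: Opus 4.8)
The plan is to read both inequalities straight off Propositions~\ref{prop-ub} and~\ref{prop-lb}: the pair~\eqref{eq-lowerb}--\eqref{eq-upperb} already brackets $\Lambda_1(m,\Omega)$, so the only work is to loosen the upper bound~\eqref{eq-upperb} into the more symmetric form stated in the corollary. No new variational or convexity input is needed beyond the two cited propositions.

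The lower bound is precisely~\eqref{eq-lowerb}, hence nothing remains to be done there. For the upper bound I would start from~\eqref{eq-upperb} and observe that the quantity under the square root, $(|\Omega|\,E_1(\Gamma)-m)^2 + 4\,\gamma_1^2\,m\,E_1(\Gamma)$, is a sum of two nonnegative terms (recall $\gamma_1 = \int_\Omega \varphi_1$, while $m>0$ and $E_1(\Gamma)\ge 0$), so its square root is $\ge 0$. Consequently the denominator on the right-hand side of~\eqref{eq-upperb} is at least $m + |\Omega|\,E_1(\Gamma)$, and since the numerator $2\,m\,E_1(\Gamma)$ is nonnegative, replacing that denominator by the smaller quantity $m + |\Omega|\,E_1(\Gamma)$ can only enlarge the fraction. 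This gives
\[
\Lambda_1(m,\Omega)\,\le\,\frac{2\,m\,E_1(\Gamma)}{m + |\Omega|\,E_1(\Gamma)},
\]
which, combined with~\eqref{eq-lowerb}, is exactly the assertion.

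There is essentially no obstacle here: the single point to verify is the sign of the discriminant, which is immediate. The corollary is really a convenient repackaging of Propositions~\ref{prop-ub} and~\ref{prop-lb} into two bounds that coincide up to the factor $2$ in the numerator, so that $\Lambda_1(m,\Omega)$ is pinned down by $m,\,|\Omega|$ and $E_1(\Gamma)$ up to a universal constant; this is the form that will be used when comparing $\Lambda_1(m,\Omega)$ with the inradius in Theorem~\ref{thm-convex}. In the degenerate case $E_1(\Gamma)=0$, i.e. $\Gamma$ of zero measure, all three quantities vanish and the statement is trivial.
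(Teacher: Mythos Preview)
Your argument is correct and matches the paper's own proof, which simply states that the corollary follows immediately from Propositions~\ref{prop-ub} and~\ref{prop-lb}. The only step beyond quoting those propositions is exactly the one you spell out: dropping the nonnegative square root in the denominator of~\eqref{eq-upperb} to obtain the weaker but more symmetric upper bound.
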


\begin{proof}
This follows immediately from Propositions \ref{prop-ub} and \ref{prop-lb}.
\end{proof}


\section{\bf Estimates on $\lambda_1(\sigma, \Omega)$}  \label{s-convex}

In this section we are going to study the properties of $\lambda_1(\sigma, \Omega)$ for a fixed $\sigma$.  
This problem has attracted a considerable attention mainly in the case when $\sigma$ is constant. An extension of the Faber-Krahn inequality, well-known for the Dirichlet-Laplacian, was established first in \cite{boss1} in dimension two and later in \cite{da2} for any dimension, see also \cite{bg}. Monotonicity properties of $\lambda_1(\sigma, \Omega)$ with respect to the domain shrinking were studied in \cite{pw, gs1}.  Various bounds on $\lambda_1(\sigma, \Omega)$ in terms of eigenvalues of Dirichlet and (or) Neumann Laplacian were found in \cite{ph, sp, sp2}.

\smallskip

Our aim is to estimate $\lambda_1(\sigma,\Omega)$ only in terms of $\sigma$ and the geometric properties of $\Omega$.  For this purpose we introduce some notation. Let 
$$
\delta(x) = \min_{y\in\partial\Omega} |x-y|, \qquad x\in\Omega
$$
be the distance between a point $x$ and the boundary of $\Omega$, and let 
$$
R_\Omega = \sup_{x\in\Omega}\,  \delta(x)
$$
be the inradius of $\Omega$. Finally, let $K_N$ denote the lowest eigenvalue of the
Dirichlet Laplacian on a unit ball in $\R^N$. It is known that the lowest eigenvalue $\lambda_1^D(\Omega)$ of the Dirichlet-Laplacian on a convex domain $\Omega$ can be estimated in terms of the inradius as follows:
\begin{equation} \label{two-sided-dir}
\frac 14\  R_\Omega^{-2}\, \leq \, \lambda_1^D(\Omega) \, \leq \, K_N \ R_\Omega^{-2}.
\end{equation}
Here the upper bound follows by scaling and monotonicity of $\lambda_1^D(\Omega)$ with respect to the domain enlarging, while
the lower bound is a consequence of the Hardy inequality for Dirichlet-Laplacians on convex domains
\begin{equation} \label{eq:hardy-dir}
\int_{\Omega} |\nabla u|^2\, dx \,
\geq \, \frac 14\, \int_\Omega\, \frac{|u|^2}{\delta^2}\, \, dx \qquad \forall\ u \in H^1_0(\Omega),
\end{equation}
see e.g. \cite[Sect.5.3]{davies}. It is well-known that the constant $1/4$ on the right hand side of \eqref{eq:hardy-dir} is sharp
In order to get an idea how \eqref{two-sided-dir} should be modified when $\lambda_1^D(\Omega)$ is replaced by $\lambda_1(\sigma,\Omega)$ we will first study the scaling properties of the latter.

\begin{theorem} \label{prop:scaling}
Assume that $\sigma\in L^\infty(\partial\Omega)$ is non-negative. 
Let $\eps>0$ and let $\lambda_1(\sigma_\eps, \eps \Omega)$ be the lowest
eigenvalue of the Robin Laplacian on the rescaled domain $\eps\, 
\Omega$ with $\sigma_\eps(s) = \sigma(s/\eps)$. Then
\begin{align} \label{limit:eps}
\lim_{\eps\to 0}\, \eps\, \lambda_1(\sigma_\eps, \eps \Omega) & = 
 |\Omega|^{-1}  \textstyle\int_{\partial\Omega} \sigma\, d\nu .
 \end{align}
Moreover, 
\begin{align}
\lim_{\eps\to \infty}\, \eps^2\, \lambda_1(\sigma_\eps, \eps \Omega) & = 
E_1(\Gamma),  \label{eps:infty}
\end{align}
where $\Gamma=$ {\rm supp}\, $\sigma$ and 
$E_1(\Gamma)$ is given by \eqref{e1}. 
\end{theorem}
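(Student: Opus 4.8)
The plan is to remove the $\eps$-dependence of the domain by rescaling and then analyse the resulting one-parameter family of Robin eigenvalues on the fixed domain $\Omega$. Writing $v(y)=u(\eps y)$ for $y\in\Omega$ and using $dx=\eps^N\,dy$ in $\Omega$, $d\nu=\eps^{N-1}\,d\nu$ on $\partial\Omega$, and $\sigma_\eps(\eps s)=\sigma(s)$, the Rayleigh quotient defining $\lambda_1(\sigma_\eps,\eps\Omega)$ transforms as
\begin{equation*}
\frac{\int_{\eps\Omega}|\nabla u|^2\,dx+\int_{\partial(\eps\Omega)}\sigma_\eps\,|u|^2\,d\nu}{\|u\|_{L^2(\eps\Omega)}^2}=\eps^{-2}\,\frac{\int_\Omega|\nabla v|^2\,dy+\eps\int_{\partial\Omega}\sigma\,|v|^2\,d\nu}{\|v\|_{L^2(\Omega)}^2}\,,
\end{equation*}
so that, taking infima, $\lambda_1(\sigma_\eps,\eps\Omega)=\eps^{-2}\,\lambda_1(\eps\sigma,\Omega)$. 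Hence \eqref{limit:eps} is equivalent to $\eps^{-1}\lambda_1(\eps\sigma,\Omega)\to|\Omega|^{-1}\int_{\partial\Omega}\sigma\,d\nu$ as $\eps\to0$, and \eqref{eps:infty} is equivalent to $\lambda_1(\eps\sigma,\Omega)\to E_1(\Gamma)$ as $\eps\to\infty$. For every $\eps>0$ the weight $\eps\sigma\in L^\infty(\partial\Omega)\subset L^1(\partial\Omega)$ is non-negative, so Lemma \ref{lem-minimiser} provides a positive normalised minimiser $v_\eps\in H^1(\Omega)$, $\|v_\eps\|_{L^2(\Omega)}=1$, with $\int_\Omega|\nabla v_\eps|^2+\eps\int_{\partial\Omega}\sigma|v_\eps|^2=\lambda_1(\eps\sigma,\Omega)$.

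For the regime $\eps\to0$ I would argue by two matching bounds. Testing the quotient with $v\equiv1$ gives $\lambda_1(\eps\sigma,\Omega)\le\eps\,|\Omega|^{-1}\int_{\partial\Omega}\sigma$, hence $\limsup_{\eps\to0}\eps^{-1}\lambda_1(\eps\sigma,\Omega)\le|\Omega|^{-1}\int_{\partial\Omega}\sigma$. The same inequality forces $\int_\Omega|\nabla v_\eps|^2\to0$, so $\{v_\eps\}$ is bounded in $H^1(\Omega)$; along a subsequence $v_\eps\rightharpoonup v_0$ weakly in $H^1(\Omega)$ and, by the compact embeddings of \eqref{trace-imbed}, $v_\eps\to v_0$ strongly in $L^2(\Omega)$ and in $L^2(\partial\Omega)$. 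Weak lower semicontinuity of the Dirichlet integral gives $\nabla v_0=0$, so $v_0$ is constant, and $\|v_0\|_{L^2(\Omega)}=1$ forces $v_0\equiv|\Omega|^{-1/2}$. Since $\sigma\in L^\infty(\partial\Omega)$, strong $L^2(\partial\Omega)$-convergence yields $\int_{\partial\Omega}\sigma|v_\eps|^2\to|\Omega|^{-1}\int_{\partial\Omega}\sigma$, so that
\begin{equation*}
\eps^{-1}\lambda_1(\eps\sigma,\Omega)=\eps^{-1}\int_\Omega|\nabla v_\eps|^2+\int_{\partial\Omega}\sigma|v_\eps|^2\;\ge\;\int_{\partial\Omega}\sigma|v_\eps|^2\;\longrightarrow\;|\Omega|^{-1}\int_{\partial\Omega}\sigma\,.
\end{equation*}
Applying this to a subsequence realising the lower limit (and extracting a further weakly convergent subsequence as above) gives $\liminf_{\eps\to0}\eps^{-1}\lambda_1(\eps\sigma,\Omega)\ge|\Omega|^{-1}\int_{\partial\Omega}\sigma$, which together with the previous $\limsup$ bound proves \eqref{limit:eps}.

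For the regime $\eps\to\infty$ I would first observe that $\eps\mapsto\lambda_1(\eps\sigma,\Omega)$ is non-decreasing (the quadratic form is monotone in $\eps$ since $\sigma\ge0$) and bounded above by $E_1(\Gamma)$: the normalised first eigenfunction $\varphi_1$ of $-\Delta_D^\Gamma$ satisfies $\varphi_1|_\Gamma=0$, and since $\sigma$ vanishes $\nu$-a.e.\ outside $\Gamma=\mathrm{supp}\,\sigma$ we get $\int_{\partial\Omega}\sigma|\varphi_1|^2=0$, so $\lambda_1(\eps\sigma,\Omega)\le\Q_\Gamma[\varphi_1]=E_1(\Gamma)$. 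Hence $L:=\lim_{\eps\to\infty}\lambda_1(\eps\sigma,\Omega)$ exists and $L\le E_1(\Gamma)$. For the reverse inequality I would use the minimisers once more: from $\lambda_1(\eps\sigma,\Omega)\le E_1(\Gamma)$ we get $\int_\Omega|\nabla v_\eps|^2\le E_1(\Gamma)$ and $\int_{\partial\Omega}\sigma|v_\eps|^2\le E_1(\Gamma)/\eps\to0$; along a subsequence $v_\eps\rightharpoonup v_0$ weakly in $H^1(\Omega)$, $v_\eps\to v_0$ strongly in $L^2(\Omega)$ and in $L^2(\partial\Omega)$, whence $\|v_0\|_{L^2(\Omega)}=1$ and, since $\sigma\ge0$, $\int_{\partial\Omega}\sigma|v_0|^2=0$, so $v_0=0$ $\nu$-a.e.\ on $\Gamma$ and $v_0\in\D(\Q_\Gamma)$. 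Weak lower semicontinuity then gives
\begin{equation*}
L=\lim_{\eps\to\infty}\lambda_1(\eps\sigma,\Omega)\;\ge\;\liminf_{\eps\to\infty}\int_\Omega|\nabla v_\eps|^2\;\ge\;\int_\Omega|\nabla v_0|^2\;\ge\;E_1(\Gamma)\,\|v_0\|_{L^2(\Omega)}^2=E_1(\Gamma)\,,
\end{equation*}
which together with $L\le E_1(\Gamma)$ proves \eqref{eps:infty}.

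I expect the only genuinely delicate step to be the claim, in the regime $\eps\to\infty$, that the weak limit $v_0$ lies in $\D(\Q_\Gamma)$. The identity $\int_{\partial\Omega}\sigma|v_0|^2=0$ only gives $v_0=0$ $\nu$-a.e.\ on $\{\sigma>0\}$, and upgrading this to $v_0|_\Gamma=0$ relies on $\Gamma=\mathrm{supp}\,\sigma$ agreeing with $\{\sigma>0\}$ up to a $\nu$-null set (which holds in the situations of interest; in its absence the natural limit is $E_1(\{\sigma>0\})$, which then has to be identified with $E_1(\Gamma)$). Everything else---the change of variables, the two compactness extractions, and the elementary Rayleigh-quotient estimates---is routine once Lemma \ref{lem-minimiser} and the variational characterisation \eqref{e1} are available.
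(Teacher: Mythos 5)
Your proposal is correct and follows essentially the same route as the paper: rescale to the fixed domain so that $\lambda_1(\sigma_\eps,\eps\Omega)=\eps^{-2}\lambda_1(\eps\sigma,\Omega)$, use constant and $\varphi_1$ test functions for the upper bounds, and extract weakly convergent subsequences of the normalised minimisers together with the compact trace embedding \eqref{trace-imbed} and weak lower semicontinuity for the matching lower bounds. The caveat you raise about $\{\sigma>0\}$ versus $\mathrm{supp}\,\sigma$ is apt --- the paper's own proof passes over it in exactly the same way --- so your argument is, if anything, slightly more careful.
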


\begin{proof}
By a change of variables we obtain
\begin{equation} \label{scaling}
\lambda_1(\sigma_\eps, \eps \Omega) = \inf_{u\in H^1(\Omega)}  \frac{
\eps^{N-2}\, \int_{\Omega} |\nabla u|^2\, dx +\eps^{N-1}\,
\int_{\partial\Omega} \sigma\, |u|^2\, d\nu }{\eps^N\, \int_\Omega |u|^2\, dx}\, .
\end{equation}
Lemma \ref{lem-minimiser} implies that there exists a sequence
of positive minimisers $u_\eps\in H^1(\Omega)$ of problem \eqref{scaling}. We may suppose that  $\|u_\eps\|_{L^2(\Omega)}=1$ for all $\eps>0$. Hence
 \begin{equation} \label{eq-u-eps}
\lambda_1(\sigma_\eps, \eps \Omega) = 
\eps^{-2} \int_{\Omega} |\nabla u_\eps|^2\, dx +\eps^{-1}
\int_{\partial\Omega} \sigma\, |u_\eps|^2\, d\nu .
\end{equation}

\smallskip

\noindent Consider first the limit $\eps\to 0$. A simple test function argument with a constant function shows that
\begin{equation} \label{1-upperb}
 \eps \, \lambda_1(\sigma_\eps, \eps \Omega)  \leq
\frac{\textstyle\int_{\partial\Omega} \sigma\, d\nu}{|\Omega|} \qquad \forall\ \eps >0.
\end{equation}
In view of \eqref{1-upperb} and
\eqref{eq-u-eps} 
$$
 \eps \, \lambda_1(\sigma_\eps, \eps \Omega)= \eps^{-1}\, \int_{\Omega} |\nabla u_\eps|^2\, dx
 + \int_{\partial\Omega} \sigma\, |u_\eps|^2\, d\nu\, \leq \, \frac{\textstyle\int_{\partial\Omega} \sigma\, d\nu}{|\Omega|}\, .
$$
We thus have $\|\nabla u_\eps\|_{L^2(\Omega)}\to 0$ as $\eps\to 0$. Let $v_\eps$ be a subsequence of $u_\eps$. 
Since $v_\eps$ is bounded in $H^1(\Omega)$, it contains another
subsequence (which we still denote by $v_\eps$), such that $v_\eps$
converges weakly in $H^1(\Omega)$ to some $v$. Hence $\|v\|_{L^2(\Omega)}^2=1$. 
Moreover, the weak lower semicontinuity of $\int_{\Omega} |\nabla u|^2$ implies that  $\|\nabla v\|_{L^2(\Omega)}=0$ and therefore $
v= 1/\sqrt{|\Omega|}$ almost everywhere in $\Omega$.  We thus conclude that $v_\eps\to v$  
in $H^1(\Omega)$. Since this holds for any subsequence of $u_\eps$, we conclude that $u_\eps \to 1/\sqrt{|\Omega|}\, $ 
in $H^1(\Omega)$. By \eqref{trace-imbed} it follows that 
$$
\lim_{\eps\to 0} \|u-u_\eps\|_{L^2(\partial\Omega)}  =0.
$$
Since $\sigma\in L^\infty(\partial\Omega)$, in view of equation \eqref{eq-u-eps} we then have
$$
\liminf_{\eps\to 0}\, \eps \, \lambda_1(\sigma_\eps, \eps \Omega) \geq
\liminf_{\eps\to 0}\, \int_{\partial\Omega} \sigma\, |u_\eps|^2\, d\nu =
\int_{\partial\Omega} \sigma\, |u|^2\, d\nu = \frac{\textstyle\int_{\partial\Omega} \sigma\, d\nu}{|\Omega|}\, .
$$
This in combination with \eqref{1-upperb} proves \eqref{limit:eps}. To prove \eqref{eps:infty} we first note that 
\begin{equation} \label{upperb-infty}
\eps^2\, \lambda_1(\sigma_\eps, \eps \Omega) \, \leq\, E_1(\Gamma) \qquad \forall\ \eps>0,
\end{equation}
which follows by choosing the first eigenfunction $\varphi_1$ of the operator $-\Delta_D^\Gamma$ in $L^2(\Omega)$ as a test function in \eqref{scaling}. Now let $w_\eps$ be a subsequence of $u_\eps$. 
The sequence $w_\eps$ is then bounded in $H^1(\Omega)$ as $\eps\to\infty$, see \eqref{eq-u-eps}. Let $w$ be a weak limit of $w_\eps$ (or a suitable subsequence which we still denote by $w_\eps$) in $H^1(\Omega)$. Thus $\|w\|_{L^2(\Omega)}=1$.  From \eqref{eq-u-eps} and \eqref{upperb-infty} we conclude that 
$\int_{\partial\Omega} \sigma\, |w_\eps|^2\, d\nu \to 0$ as $\eps\to\infty$. Since $w_\eps \to w$ strongly in $L^2(\partial\Omega)$, see \eqref {trace-imbed}, it follows that $\int_{\partial\Omega} \sigma\, |w|^2\, d\nu=0$. Consequently, $w(s) =0$ for almost every $s\in \Gamma$ which implies that $w$ belongs to the form domain D$(\Q_\Gamma)$ of the operator $-\Delta_D^\Gamma$, see \eqref{dn-form}.  By 
 the weak lower semicontinuity of $\int_{\Omega} |\nabla u|^2$ and \eqref{e1} we thus conclude that 
$$
\liminf_{\eps\to\infty} \big(\int_{\Omega} |\nabla w_\eps|^2\, dx +\eps \int_{\partial\Omega} \sigma\, |w_\eps|^2\, d\nu \big) \, \geq\, \int_{\Omega} |\nabla w|^2\, dx \geq E_1(\Gamma). 
$$
On the other hand, from \eqref{eq-u-eps} and \eqref{upperb-infty} we get
$$
\limsup_{\eps\to\infty} \big(\int_{\Omega} |\nabla w_\eps|^2\, dx +\eps \int_{\partial\Omega} \sigma\, |w_\eps|^2\, d\nu \big) \,  \leq E_1(\Gamma). 
$$
Hence $w=\varphi_1$. Since $w_\eps$ was arbitrary, we conclude that $u_\eps\to \varphi_1$ weakly in $H^1(\Omega)$, which implies
$$
\liminf_{\eps\to\infty} \eps^2\, \lambda_1(\sigma_\eps, \eps \Omega)  \, \geq\, \int_{\Omega} |\nabla\varphi_1|^2\, dx =E_1(\Gamma).
$$
In view of \eqref{upperb-infty} this yields \eqref{eps:infty}.  
\end{proof}

\begin{remark}
The asymptotic behaviour \eqref{limit:eps} appears only when we deal with the first eigenvalue $\lambda_1(\sigma, \Omega)$. In fact, for any $\sigma \geq 0$ we have by the variational principle $\lambda_j^N(\Omega) \leq \lambda_j(\sigma, \Omega) \leq \lambda_j^D(\Omega)$, where $\lambda_j^N(\Omega),  \lambda_j^D(\Omega)$ and $\lambda_j(\sigma, \Omega)$ denote the $j$th eigenvalues of the Neumann, Dirichlet and Robin Laplacian respectively. By scaling 
$$
 \eps^{-2}\, \lambda_j^N(\Omega) = \lambda_j^N(\eps\Omega)\, \leq\, \lambda_j(\sigma_\eps, \eps\Omega)\, \leq\,  \lambda_j^D(\eps\Omega)  =  \eps^{-2}\, \lambda_j^D(\Omega). 
$$
Since $\lambda_j^N(\Omega) >0$ whenever $j\geq 2$, it follows that $\lambda_j(\sigma_\eps, \eps\Omega) \asymp \eps^{-2}$ for all
$j\geq 2$. This shows that the Robin Laplacian differs from both Dirichlet and Neumann Laplacians in the sense that its lowest eigenvalue scales, when $\eps\to 0$, in a different way than all the other eigenvalues. 
\end{remark}

\noindent Theorem \ref{prop:scaling} says that $\lambda_1(\sigma,\Omega) \sim R^{-1}_\Omega$ as $R_\Omega \to 0$ and therefore inequality \eqref{two-sided-dir} must fail if we replace $\lambda_1^D(\Omega)$ by $\lambda_1(\sigma,\Omega)$. 

\noindent We are going to prove an analogue of \eqref{two-sided-dir} for $\lambda_1(\sigma,\Omega)$ in the case when $\sigma$ is {\it constant} and $\Omega$ is {\it convex}. For an upper bound we will use the results of the previous section. In order to find an appropriate lower bound we start by proving a modified version of Hardy inequality \eqref{eq:hardy-dir}.

\begin{lemma} \label{hardy-convex}
Let $\sigma\geq 0$ and assume that $\Omega$ is convex. Then the inequality 
\begin{equation} \label{eq:hardy-convex}
\int_{\Omega} |\nabla u(x)|^2\, dx + \sigma\int_{\partial\Omega} |u(s)|^2\, d\nu(s) \,
\geq \, \alpha\sigma(1-\alpha\sigma)\, \int_\Omega\,
\frac{|u(x)|^2}{(\delta(x)+\alpha)^2}\, \, dx.
\end{equation}
holds true for all $u\in H^1(\Omega)$ and any $\alpha>0$.
\end{lemma}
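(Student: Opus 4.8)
The plan is to use the classical ``ground state substitution'' (vector field) method. For real-valued $u\in H^1(\Omega)$ and a sufficiently regular vector field $V$ on $\Omega$, the trivial inequality $0\le\int_\Omega|\nabla u+uV|^2\,dx$ expands, after writing $2u\,\nabla u\cdot V = V\cdot\nabla(u^2)$ and integrating by parts, as
\[
\int_\Omega|\nabla u|^2\,dx \;\ge\; \int_\Omega\big(\operatorname{div}V - |V|^2\big)\,u^2\,dx \;-\; \int_{\partial\Omega}(V\cdot n)\,u^2\,d\nu .
\]
Everything hinges on choosing $V$ so that (i) $V\cdot n\le\sigma$ on $\partial\Omega$, so the boundary term is absorbed into $\sigma\int_{\partial\Omega}u^2$, and (ii) $\operatorname{div}V-|V|^2\ge\alpha\sigma(1-\alpha\sigma)(\delta+\alpha)^{-2}$ in $\Omega$. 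I would take
\[
V(x) = -\,\frac{\alpha\sigma}{\delta(x)+\alpha}\,\nabla\delta(x),
\]
which is formally $\nabla w/w$ for the ``ground state'' $w=(\delta+\alpha)^{-\alpha\sigma}$. (For complex $u$ one replaces $u^2$ by $|u|^2$ throughout, with no other change.)

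With this choice, (i) and (ii) follow from just two properties of the distance function of a convex set. First, $|\nabla\delta|=1$ almost everywhere in $\Omega$, so $|V|^2=\alpha^2\sigma^2(\delta+\alpha)^{-2}$ a.e., and since $\nabla(\delta+\alpha)\cdot\nabla\delta=|\nabla\delta|^2=1$ we obtain $\operatorname{div}V = \alpha\sigma(\delta+\alpha)^{-2} - \alpha\sigma(\delta+\alpha)^{-1}\Delta\delta$. Second, convexity of $\Omega$ forces $\delta$ to be concave, hence $\Delta\delta\le 0$ in the distributional sense; since $\alpha\sigma(\delta+\alpha)^{-1}u^2\ge 0$, the contribution of that term has a favourable sign and
$\operatorname{div}V-|V|^2\ge(\alpha\sigma-\alpha^2\sigma^2)(\delta+\alpha)^{-2}=\alpha\sigma(1-\alpha\sigma)(\delta+\alpha)^{-2}$, which is (ii). For (i), on $\partial\Omega$ one has $\delta=0$ and $\nabla\delta=-n$ with $n$ the outer unit normal, so $V\cdot n=\alpha\sigma/\alpha=\sigma$; the boundary term equals exactly $-\sigma\int_{\partial\Omega}u^2\,d\nu$, and moving it to the left produces precisely the $\sigma\int_{\partial\Omega}|u|^2$ term in \eqref{eq:hardy-convex}. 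Combining with (ii) yields the claimed inequality.

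The one genuinely delicate point is the rigour of the integration by parts: $\delta$ is only Lipschitz, $\nabla\delta$ is merely $BV$, and $\Delta\delta$ is a (locally finite, non-positive) Radon measure rather than a function, so both the displayed identity and the lower bound on $\int_\Omega(\operatorname{div}V)u^2$ need to be justified carefully. I would handle this by approximation: by Assumption \ref{ass-basic} it suffices to prove \eqref{eq:hardy-convex} for $u\in C^\infty(\overline\Omega)$, and one may replace $\delta$ on subdomains $\Omega'\Subset\Omega$ by its mollifications $\delta_\eta=\delta*\rho_\eta$, which are smooth, concave, satisfy $|\nabla\delta_\eta|\le 1$ and $\Delta\delta_\eta\le 0$, and for which every computation above is classical; passing $\eta\to0$ (using $\nabla\delta_\eta\to\nabla\delta$ a.e.) and then $\Omega'\uparrow\Omega$ gives the result, the boundary contribution being under control since $\delta+\alpha\ge\alpha>0$ keeps $V$ bounded up to $\partial\Omega$. (Equivalently, one may first prove the inequality on smooth convex domains exhausting $\Omega$ and then pass to the limit.) All remaining steps are the direct computations indicated above.
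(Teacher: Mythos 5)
Your proof is correct and is essentially the same as the paper's: your vector field $V=-\alpha\sigma(\delta+\alpha)^{-1}\nabla\delta$ makes $\int_\Omega|\nabla u+uV|^2\ge 0$ identical to the paper's completed square $\int_\Omega\bigl|\nabla u-\tfrac{\alpha\sigma u}{\delta+\alpha}\nabla\delta\bigr|^2\ge 0$, and both arguments then rest on $|\nabla\delta|=1$ a.e., $\partial_n\delta=-1$ on $\partial\Omega$, and $\Delta\delta\le 0$ by concavity. Your additional mollification step to justify the integration by parts when $\Delta\delta$ is only a non-positive measure is a welcome refinement of a point the paper passes over quickly.
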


\begin{proof}
The inequality is obvious for $\sigma=0$. Hence we may assume that $\sigma>0$. In view of the regularity 
of $\Omega$ is suffices to prove \eqref{eq:hardy-convex} for all $u\in C^1(\overline{\Omega})$. 
Since $|\nabla\delta|=1$ almost everywhere, we have
\begin{align} \label{integr}
\int_\Omega\, | \nabla u -\frac{\alpha\, \sigma\,
u}{\delta+\alpha}\, \nabla\delta\, |^2\, dx &= \int_\Omega\Big(
|\nabla u|^2 +\frac{\alpha^2\sigma^2\, u^2}{(\delta+\alpha)^2}\,
-\frac{2\, \alpha\, \sigma\, u}{\delta+\alpha}\, \nabla
u\cdot\nabla\delta \Big)\, dx.
\end{align}
Moreover, integration by parts gives
\begin{align} \label{perpartes}
2\int_\Omega \frac{u}{\delta+\alpha}\, \nabla u\cdot\nabla\delta\,
dx &= \int_{\partial\Omega}\,
\partial_n\delta(s)\, \frac{u^2(s)}{\alpha}\, d\nu(s) - \int_\Omega\,
\frac{u^2\, \Delta\delta}{\delta+\alpha}\, dx + \int_\Omega\,
\frac{u^2}{(\delta+\alpha)^2}\, dx.
\end{align}
Recall that $|\partial_n\delta(s)| =1$.
Moreover, from the convexity of $\Omega$ follows that $\delta$ is
concave and therefore $\Delta\delta \leq 0$ in the sense of distributions. 
Hence inserting \eqref{perpartes} into \eqref{integr} we get
\begin{align*}
\int_{\Omega} |\nabla u|^2\, dx + \sigma\int_{\partial\Omega} u^2(s)\, d\nu(s) \,
& \geq \alpha\sigma(1-\alpha\sigma)\, \int_\Omega\,
\frac{u^2}{(\delta(x)+\alpha)^2}\, \, dx + \int_\Omega\, | \nabla u
-\frac{\alpha\sigma\, u}{\delta+\alpha}\, \nabla\delta |^2\, dx,
\end{align*}
which proves the statement.
\end{proof}

\medskip

\noindent Armed with Lemma \ref{hardy-convex} we can state the following

\begin{theorem} \label{thm-convex}
Assume that $\Omega$ is convex and that $\sigma>0$ is constant. Then
\begin{equation} \label{two-sided-robin}
\frac 14\, \frac{\sigma}{ R_\Omega(1+\sigma\, R_\Omega)}\, \leq\,  \lambda_1(\sigma,\Omega) \,  \leq \, 2\, K_N\,  \frac{ \sigma}{
R_\Omega(1+\sigma\, R_\Omega)}\, .
\end{equation}
\end{theorem}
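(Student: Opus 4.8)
\textbf{Proof plan for Theorem \ref{thm-convex}.}

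The plan is to establish the two bounds separately, using Lemma \ref{hardy-convex} for the lower bound and Corollary \ref{cor-lambdam} together with a geometric comparison for the upper bound.

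For the \emph{lower bound}, I would start from the Hardy-type inequality \eqref{eq:hardy-convex}, which gives, for every $u\in H^1(\Omega)$ and every $\alpha>0$,
\[
\int_{\Omega} |\nabla u|^2\, dx + \sigma\int_{\partial\Omega} |u|^2\, d\nu \,\geq\, \alpha\sigma(1-\alpha\sigma)\int_\Omega \frac{|u|^2}{(\delta(x)+\alpha)^2}\, dx.
\]
Since $0\le \delta(x)\le R_\Omega$ on $\Omega$, the weight satisfies $(\delta(x)+\alpha)^2 \le (R_\Omega+\alpha)^2$, so the right-hand side is bounded below by $\dfrac{\alpha\sigma(1-\alpha\sigma)}{(R_\Omega+\alpha)^2}\,\|u\|_{L^2(\Omega)}^2$, provided $\alpha$ is chosen with $\alpha\sigma<1$. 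Dividing by $\|u\|_{L^2(\Omega)}^2$ and taking the infimum over $u$ yields
\[
\lambda_1(\sigma,\Omega)\,\ge\, \frac{\alpha\sigma(1-\alpha\sigma)}{(R_\Omega+\alpha)^2}\qquad\text{for all }\alpha\in(0,1/\sigma).
\]
It then remains to optimise the right-hand side over $\alpha$. Rather than carry out the exact optimisation, I would simply insert the convenient choice $\alpha = 1/(2\sigma)$ (which lies in the admissible range), giving $\alpha\sigma(1-\alpha\sigma)=\tfrac14$ and $(R_\Omega+\alpha)^2 = (R_\Omega + \tfrac{1}{2\sigma})^2 = \dfrac{(1+2\sigma R_\Omega)^2}{4\sigma^2}$, hence
\[
\lambda_1(\sigma,\Omega)\,\ge\, \frac{\sigma^2}{(1+2\sigma R_\Omega)^2}\,\ge\, \frac14\,\frac{\sigma}{R_\Omega(1+\sigma R_\Omega)},
\]
where the last inequality is an elementary estimate comparing $(1+2\sigma R_\Omega)^2$ with $4\sigma R_\Omega(1+\sigma R_\Omega)\cdot\frac{1}{\sigma}\cdot\sigma = 4(1+\sigma R_\Omega)\sigma R_\Omega$; indeed $(1+2x)^2 = 1+4x+4x^2 \ge 4x+4x^2 = 4x(1+x)$ with $x=\sigma R_\Omega$. (One could also optimise $\alpha$ exactly to get a slightly better constant, but the stated $1/4$ suffices and keeps the argument clean.)

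For the \emph{upper bound}, the idea is to localise inside a ball of radius $R_\Omega$ and use the monotonicity of the Robin eigenvalue under the choice of test functions. Let $B=B(x_0,R_\Omega)$ be an inscribed ball, $B\subseteq\Omega$. Take a test function for $Q[\sigma,\cdot\,]$ supported in $\overline B$: precisely, extend the lowest Robin eigenfunction of the ball $B$ with constant boundary parameter $\sigma$ by zero to all of $\Omega$. Since this function vanishes near $\partial\Omega$, the boundary term over $\partial\Omega$ is zero, and the Rayleigh quotient on $\Omega$ equals the Robin Rayleigh quotient on $B$ (whose boundary term is $\sigma\int_{\partial B}|u|^2\,d\nu$). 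Hence
\[
\lambda_1(\sigma,\Omega)\,\le\, \lambda_1(\sigma, B(x_0,R_\Omega)).
\]
Now $\lambda_1(\sigma, B(x_0,R_\Omega))$ is exactly of the form covered by the results of Section \ref{s-optim} with $\Gamma=\partial B$: the constant $\sigma$ corresponds, via the characterisation in Theorem \ref{thm-sup}, to a mass $m = \sigma\,\nu(\partial B) = \sigma\, N\omega_N R_\Omega^{N-1}$ (where $\omega_N = |B(0,1)|$), and the constant boundary value makes this $\sigma$ the maximiser, so $\lambda_1(\sigma,B) = \Lambda_1(m, B)$. Applying the upper bound of Corollary \ref{cor-lambdam} with $|B| = \omega_N R_\Omega^N$ and $E_1(\partial B) = K_N R_\Omega^{-2}$ gives
\[
\lambda_1(\sigma, B)\,\le\, \frac{2\, m\, E_1(\partial B)}{m + |B|\, E_1(\partial B)}
= \frac{2\,\sigma N\omega_N R_\Omega^{N-1}\cdot K_N R_\Omega^{-2}}{\sigma N\omega_N R_\Omega^{N-1} + \omega_N R_\Omega^{N}\cdot K_N R_\Omega^{-2}}
= \frac{2\,\sigma N K_N}{\sigma N R_\Omega + K_N R_\Omega}.
\]
Since $N\ge 1$ and $K_N \ge K_1 = \pi^2/4 > 1 \ge$ (and in any case one only needs $N\ge 1$, so $\sigma N R_\Omega + K_N R_\Omega \ge \sigma R_\Omega + K_N R_\Omega$ when $K_N$ absorbs the factor, i.e. using $N\ge 1$ and dividing numerator and denominator appropriately), one checks that
\[
\frac{2\,\sigma N K_N}{R_\Omega(\sigma N + K_N)} \,\le\, \frac{2 K_N \sigma}{R_\Omega(\sigma + K_N)}\cdot\frac{N(\sigma+K_N)}{\sigma N + K_N}\,\le\, 2 K_N\,\frac{\sigma}{R_\Omega(1+\sigma R_\Omega)}
\]
after using $\sigma + K_N \le (1+\sigma R_\Omega)\cdot\frac{K_N + \sigma}{\,\cdots}$; more directly, the cleanest route is to bound $\sigma N + K_N \ge \sigma + K_N \ge K_N(1 + \sigma/K_N)$ and then, since $K_N \ge 1$ fails in general I instead note $\sigma N R_\Omega + K_N R_\Omega \ge R_\Omega + \sigma R_\Omega^2 \cdot(K_N/R_\Omega)\ge\cdots$. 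I will verify the final elementary chain of inequalities carefully, reducing everything to $\dfrac{2\sigma NK_N}{R_\Omega(\sigma N + K_N)}\le \dfrac{2K_N\sigma}{R_\Omega(1+\sigma R_\Omega)}$, i.e. to $N(1+\sigma R_\Omega)\le \sigma N + K_N$, which since $N R_\Omega \le$ (comparison of $R_\Omega$ with the ball radius) and $K_N\ge N$ (true, as $K_N\ge K_1>1$ and grows) holds.

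\textbf{Main obstacle.} The lower bound is essentially a one-line consequence of Lemma \ref{hardy-convex} plus an elementary optimisation, so it is routine. The real work is in the upper bound: one must (i) legitimately reduce from $\Omega$ to an inscribed ball by a zero-extension test function — this is clean because the extended function has zero trace on $\partial\Omega$; and (ii) match the constant-$\sigma$ problem on the ball with the optimisation framework of Section \ref{s-optim} so that Corollary \ref{cor-lambdam} applies, then chase the resulting explicit expression down to the form $2K_N\sigma/(R_\Omega(1+\sigma R_\Omega))$. The delicate point is the final elementary inequality reconciling the dimensional constants $N$, $\omega_N$, $K_N$ with the target bound; I expect this to require the (known) inequality $K_N\ge K_1=\pi^2/4$ and a short case-free algebraic manipulation, but no deep input.
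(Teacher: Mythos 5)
Both halves of your argument contain genuine gaps.

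\textbf{Lower bound.} The shortcut $\alpha=1/(2\sigma)$ does not reach the claimed constant. Writing $x=\sigma R_\Omega$, that choice gives $\lambda_1(\sigma,\Omega)\ge \sigma^2/(1+2x)^2$, and you then need $\sigma^2/(1+2x)^2\ge \sigma^2/\bigl(4x(1+x)\bigr)$, i.e. $(1+2x)^2\le 4x(1+x)$. But $(1+2x)^2=1+4x+4x^2>4x+4x^2=4x(1+x)$ for every $x>0$ --- this is exactly the computation you wrote down, only with the conclusion drawn in the wrong direction. So your final inequality is false and the bound you obtain is strictly weaker than the one asserted. The remedy is precisely the exact optimisation you declined to carry out: the maximum of $\alpha\mapsto \alpha\sigma(1-\alpha\sigma)/(R_\Omega+\alpha)^2$ is attained at $\alpha=R_\Omega/(1+2\sigma R_\Omega)$ and equals exactly $\tfrac14\,\sigma/\bigl(R_\Omega(1+\sigma R_\Omega)\bigr)$. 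This is what the paper does, and its closing remark states explicitly that the single inequality corresponding to $\alpha=1/(2\sigma)$ is \emph{not} sufficient for the lower bound --- the whole point of Lemma \ref{hardy-convex} being a one-parameter family.

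\textbf{Upper bound.} The reduction to an inscribed ball is not legitimate as you justify it. The first Robin eigenfunction of $B=B(x_0,R_\Omega)$ is strictly positive on $\overline B$, in particular on $\partial B$, so its extension by zero to $\Omega$ jumps across $\partial B$ and does \emph{not} belong to $H^1(\Omega)$; it is not an admissible test function. (Even formally, the quotient evaluated on $\Omega$ would lose the term $\sigma\int_{\partial B}|u|^2\,d\nu$, so it would not reproduce the Robin Rayleigh quotient on $B$; your sentence asserting both that the boundary term vanishes and that the quotients coincide is self-contradictory.) Only $H^1_0(B)$ functions extend by zero, and those yield $\lambda_1(\sigma,\Omega)\le \lambda_1^D(B)=K_NR_\Omega^{-2}$, which is far too weak when $\sigma R_\Omega$ is small. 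Domain monotonicity of the Robin eigenvalue is a genuinely delicate matter (cf. \cite{gs1}) and cannot be dispatched this way. The paper avoids the ball altogether: it applies Theorem \ref{thm-sup} and Proposition \ref{prop-lb} with $\Gamma=\partial\Omega$ \emph{on $\Omega$ itself}, obtaining $\lambda_1(\sigma,\Omega)\le 2\sigma\bigl(|\Omega|/|\partial\Omega|+\sigma/\lambda_1^D(\Omega)\bigr)^{-1}$, and then uses convexity to prove $|\Omega|\ge |\partial\Omega|\,R_\Omega/N$, combines this with $\lambda_1^D(\Omega)\le K_NR_\Omega^{-2}$, and finishes with the Li--Yau bound $K_N\ge N$. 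Incidentally, your constant-chasing on the ball also contains an algebraic slip: $m+|B|E_1(\partial B)=\omega_NR_\Omega^{N-2}(\sigma NR_\Omega+K_N)$, not $\omega_NR_\Omega^{N-1}(\sigma N+K_N)$, so the correct reduction is to $N\le K_N$ (true), whereas the inequality $N(1+\sigma R_\Omega)\le \sigma N+K_N$ you arrive at is false for large $R_\Omega$. But this is secondary to the invalid first step.
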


\smallskip

\begin{remark}
The expression 
\begin{equation} \label{eqv}
\frac{\sigma}{ R_\Omega(1+\sigma\, R_\Omega)}
\end{equation} 
which appears on both sides of inequality \eqref{two-sided-robin} is proportional to $R^{-1}_\Omega$ for $R_\Omega \to 0$ and to $R_\Omega^{-2}$ for $R_\Omega\to \infty$. This is in agreement with Theorem \ref{prop:scaling}. It is also worth noticing that \eqref{eqv} is, just like $\lambda_1(\sigma, \Omega)$, an increasing function of $\sigma$, and that in the limit $\sigma\to \infty$ the two-sided inequality \eqref{two-sided-robin} turns, up to the multiplicative factor $2$, into \eqref{two-sided-dir} . 
\end{remark}

\begin{proof}[Proof of Theorem \ref{thm-convex}]
By inequality \eqref{eq:hardy-convex} we have
$$
\lambda_1(\sigma,\Omega) \geq \frac{\alpha \sigma(1-\alpha
\sigma)}{(R_\Omega+\alpha)^2}\, \qquad \forall\, \alpha>0.
$$
The lower bound in  \eqref{two-sided-robin} then follows by maximising the right hand side of the above inequality
with respect to $\alpha$.
As for the upper bound, we apply Theorem \ref{thm-sup} and Proposition \ref{prop-lb}  with $\Gamma=\partial\Omega$ to obtain  
\begin{equation} \label{aux-2}
\lambda_1(\sigma,\Omega) \, \leq \, \Lambda_1(\sigma\, |\partial\Omega|, \, \Omega) \, \leq \, 2\, \sigma
\left(\frac{|\Omega|}{|\partial\Omega|}
+\frac{\sigma}{\lambda_1^D(\Omega)}\right)^{-1}.
\end{equation}
Let us fix a system of coordinates in such a way that the ball $B(o,R_\Omega)$ centred in the origin $o$ satisfies $B(o, R_\Omega)\subseteq\Omega$. As mentioned above, the function $\delta(x)$ is concave on $\Omega$. Hence 
$$
\nabla\delta(x) \cdot (y-x) \geq \delta(y) -\delta(x)
$$
for all $x,y\in\Omega$ for which $\nabla\delta(x)$ exists. Since $\nabla\delta(s) = -n(s)$, we can insert $x=s\in\partial\Omega$ and $y=o$ in the above inequality to find out that $s \cdot n(s) \geq R_\Omega$ almost everywhere on $\partial\Omega$. Consequently, by the Gauss Theorem
$$
|\Omega| = \frac 1N\, \int_\Omega \text{div}\, x\ dx = \frac 1N\,
\int_{\partial\Omega} s \cdot n(s)\, d\nu(s) \, \geq \, |\partial\Omega|\, \frac{R_\Omega}{N}.
$$
This in combination with \eqref{two-sided-dir} and \eqref{aux-2} gives
$$
\lambda_1(\sigma,\Omega) \, \leq \, C_N\,  \frac{ \sigma}{ R_\Omega(1+\sigma\, R_\Omega)}\, , \qquad C_N= 2 \max\{N, K_N\}.
$$
Moreover, from the Li-Yau inequality, see \cite{ly} or \cite[p.305]{ll}, it follows that
$$
K_N \geq \, \frac{4\, N}{N+2}\, \Gamma\Big(1+\frac N2\Big)^{\frac 4N},
$$
where $\Gamma(\cdot)$ is the Euler gamma functions. By induction we then find out that $K_N \geq N$ for all $N\in\N$, 
which shows that $C_N=2\, K_N$.
This completes the proof of the upper bound in \eqref{two-sided-robin}. 
\end{proof}

\begin{remark}
By setting $\alpha= 1/2\sigma$ in \eqref{eq:hardy-convex} we obtain 
\begin{equation} \label{hardy-14}
\int_{\Omega} |\nabla u(x)|^2\, dx + \sigma\int_{\partial\Omega} |u(s)|^2\, d\nu(s) \,
\geq \, \frac 14\, \int_\Omega\,
\frac{|u(x)|^2}{(\delta(x)+\frac{1}{2\sigma})^2}\, \, dx,
\end{equation}
which is a special case of \cite[Thm3.1]{kl}, where a Hardy inequality for Robin-Laplacians with general (not necessarily constant) 
$\sigma$ was established.
However, inequality \eqref{hardy-14} would not allow us to arrive at the desired lower bound on 
$\lambda_1(\sigma, \Omega)$. For this reason we need the family of inequalities \eqref{eq:hardy-convex} parametrized by $\alpha$.
\end{remark}

\medskip



\section{\bf Acknowledgements} 
I thank Enrico Serra and Paolo Tilli for numerous helpful discussions. 
The support from the MIUR-PRINÕ08 grant for the project  ''Trasporto ottimo di massa, disuguaglianze geometriche e funzionali e applicazioni'' is gratefully acknowledged. 


\end{document}